\newtheorem{thm}{Theorem}
\algnewcommand\algorithmicswitch{\textbf{switch}}
\algnewcommand\algorithmiccase{\textbf{case}}
\algnewcommand\algorithmicassert{\textbf{then }}
\algnewcommand\Assert[1]{\State \algorithmicassert #1}%
\algnewcommand\Assertif[2]{\State \textbf{if} #1 \algorithmicassert #2}
\algnewcommand\Assertelse[1]{\State \textbf{else} #1}
\providecommand{\I}{\mathbf{I}}
\providecommand{\Q}{\mathbf{Q}}
\providecommand{\X}{\mathbf{X}}
\newcommand{\x}{\mathbf{x}}
\newcommand{\y}{\mathbf{y}}
\let\origc\c
\DeclareRobustCommand\c{\ifmmode\mathbf{c}\else\expandafter\origc\fi}
\let\origd\d
\DeclareRobustCommand\d{\ifmmode\mathbf{d}\else\expandafter\origd\fi}
\let\origu\u
\DeclareRobustCommand\u{\ifmmode\mathbf{u}\else\expandafter\origu\fi}
\let\origd\v
\DeclareRobustCommand\v{\ifmmode\mathbf{v}\else\expandafter\origv\fi}
\providecommand{\cA}{\mathcal{A}}
\providecommand{\bb}{\boldsymbol{\beta}}
\providecommand{\bh}{\widehat{\beta}}
\providecommand{\bbh}{\widehat{\boldsymbol{\beta}}}
\providecommand{\veps}{\varepsilon}
\providecommand{\bvep}{\boldsymbol{\varepsilon}}
\providecommand{\lam}{\lambda}
\providecommand{\bt}{\boldsymbol{\theta}}
\renewcommand{\Pr}{\mathbb{P}}
\providecommand{\cor}{\textrm{Cor}}
\providecommand{\Norm}{\textrm{N}}
\providecommand{\Tr}{^{\scriptscriptstyle\top}}
\providecommand{\CV}{\textrm{CV}}
\providecommand{\iid}{\overset{\text{iid}}{\sim}}
\providecommand{\abs}[1]{\left\lvert#1\right\rvert}
\providecommand{\norm}[1]{\lVert#1\rVert}
\providecommand{\gg}{\succ}
\providecommand{\al}[2]{\begin{align}\label{#1}#2\end{align}}
\providecommand{\as}[1]{\begin{align*}#1\end{align*}}
\providecommand{\cvr}{\textrm{Cover}}
\newlength{\li} \setlength{\li}{12pt}
\newcommand{\singlespace}{\baselineskip 1.1\li}
\renewcommand{\abstract}[1]{
 \centerline{
 \begin{minipage}{0.7\linewidth}
 \hrule
 \vskip 0.1in
  \begin{center}
    {\bf Abstract}
  \end{center}
  #1
 \vskip 0.1in
 \hrule
 \end{minipage}}
 \vskip 0.3in}
\title{A New Perspective on High Dimensional Confidence Intervals}
\author{
  Logan Harris\\Department of Biostatistics\\University of Iowa
  \and
  Patrick Breheny\\Department of Biostatistics\\University of Iowa
}
\date{\today}
\begin{document}

\maketitle

\abstract{
Classically, confidence intervals are required to have consistent coverage across all values of the parameter. However, this will inevitably break down if the underlying estimation procedure is biased. For this reason, many efforts have focused on debiased versions of the lasso for interval construction. In the process of debiasing, however, the connection to the original estimates are often obscured. In this work, we offer a different perspective focused on average coverage in contrast to individual coverage. This perspective results in confidence intervals that better reflect the original assumptions, as opposed to debiased intervals, which often do not even contain the original lasso estimates. To this end we propose a method based on the Relaxed Lasso that gives approximately correct average coverage and compare this to debiased methods which attempt to produce correct individual coverage. With this new definition of coverage we also briefly revisit the bootstrap, which Chatterjee and Lahiri (2010) showed was inconsistent for lasso, but find that it fails even under this alternative coverage definition.
}

\section{Introduction}

The objective function for lasso-penalized linear regression \citep{Tibshirani1996} is
$$Q(\bb|\X,\y,\lambda) = \frac{1}{2n}\norm{\y - \X\bb}_2^2 + \lambda\norm{\bb}_1,$$
where $\y$ is a length $n$ vector of independent outcomes, $\X$ is an $n \times p$ matrix of features, $\bb$ is a length $p$ vector of regression coefficients, and $\lambda$ is a regularization parameter controlling the amount of penalization. Note that the objective function involves the addition of the $L_1$ penalty, $\lambda\norm{\bb}_1 = \lambda \sum_{j = 1}^p |\beta_j|$, to the squared error loss. This typically results in sparse estimates for some of the regression coefficients (i.e., $\bh_j = 0$) depending on the choice of the regularization parameter $\lambda$. Its ability to carry out both variable selection and estimation is particularly attractive, especially in scenarios where both predictive accuracy and interpretability are important. The lasso performs particularly well in cases where the number of features is large and the underlying model is sparse \citep{HTF2009}, but has become popular in a wide variety of settings.

Nevertheless, inference for the lasso has proven challenging. By introducing both sparsity and shrinkage, the $L_1$ penalty greatly complicates the sampling distribution of the estimators. This complexity has given rise to a wide variety of inferential approaches. The majority of these approaches have focused on controlling the false discovery rate (FDR) of the selected features. Examples include the Covariance test \citep{Lockhart2014}, the Knockoff Filter \citep{Candes2015,Candes2018}, the marginal FDR \citep{Breheny2019}, and the Gaussian mirror \citep{Xing2023}.

There have also been various proposals for constructing confidence intervals (CIs), although the shrinkage/bias introduced by the $L_1$ penalty poses a number of challenges here. Several methods \citep{ZhangZhang2014, Javanmard2014} focus on ``debiasing'' the original point estimates from a lasso fit to facilitate more traditional forms of inference. An alternative approach, which accounts for the uncertainty in model selection by conditioning on the selected model, is known as Selective Inference \citep{Lee2016}, although it is worth noting that this approach only produces intervals for variables that were selected.

In this manuscript, we offer a different perspective that allows for biased intervals and focuses on correct \emph{average} coverage instead of correct \emph{individual} coverage. This perspective results in confidence intervals that better reflect the original assumptions that motivated the use of the lasso for estimation --- as opposed to debiased intervals, which often do not even contain the original lasso estimates. The goal of this paper is not to argue that either definition is inherently superior, but rather to explore the differences between them; we hope the reader finds the debate illuminating.

Section 2 examines the underlying concept of average coverage in more detail and shows that the bootstrap does not even produce average coverage. Section 3 introduces a method based on the Relaxed Lasso which does have approximately correct average coverage. Then Section 4 examines the performance of the proposed method across a number of simulations and includes a comparison to Selective Inference and the de-sparsified lasso. Lastly, in Section 5, we show the application of the proposed method to two data sets, one for acute respiratory illness and the other for gene expression data in mammalian eyes. For the sake of simplicity, we focus on lasso-penalized linear regression, but most of the discussion is relevant to all penalized regression models.

\section{Average coverage}
\label{Sec:difficulties}


When using penalized regression, we are introducing bias into the estimators by design. This has direct implications for confidence intervals constructed around these biased estimates. All methods we are aware of propose debiasing as a way to counteract (either directly or indirectly) the bias introduced in attempts to obtain traditional frequentist coverage properties. In Section~\ref{Sec:IAC}, we instead propose an alternate perspective that focuses on targeting average coverage inspired by the connection between the penalties in penalized regression and Bayesian priors. Given the connection between the bootstrap and Bayesian posteriors, one might suppose that bootstrap confidence intervals also meet this alternate definition for coverage. However, we show in Section~\ref{Sec:boot-bias} that bootstrap intervals fall increasingly short of average coverage as the dimension grows.

\subsection{Individual vs average coverage}
\label{Sec:IAC}

Classical frequentist inference is concerned with achieving proper coverage for each parameter individually. In penalized regression, bias is explicitly being introduced in the estimation procedure which poses a problem when targeting nominal interval coverage for individual parameters. Here we propose shifting focus to average coverage across all $p$ confidence intervals. In penalized regression, these two definitions of coverage can be quite different.

Letting $\cA(\y)$ denote a process that produces an interval based on data $\y$, the coverage probability for the process is defined as $\cvr(\theta) = \Pr\{\theta \in \cA(\y)\}$. Classical frequentist inference requires valid intervals to satisfy $\cvr(\theta) = 1 - \alpha$ for all values of $\theta$ (or potentially $\ge 1 - \alpha$). This is, however, incompatible with Bayesian inference. Bayesian credible intervals cannot, in general, have the same coverage for each $\theta$. What they satisfy instead is maintaining the expected coverage with respect to the prior distribution of $\theta$: $\int \cvr(\theta)p(\theta) \, d\theta = 1 - \alpha$ (this is not the definition of credibility, but it is a consequence, as we show later in this section). Unless the prior is uniform, the coverage of a Bayesian credible interval will be greater than $1-\alpha$ for some $\theta$ and less than $1-\alpha$ for other values of $\theta$.

For example, consider the credible intervals for $\theta$ in a $\Norm(\theta, \sigma^2)$ model with prior $\theta \sim \Norm(0, \tau^2)$ with $\sigma = \tau = 1$. The left side of Figure~\ref{Fig:laplace} illustrates the coverage probability for the 80\% Bayesian credible interval over a range of $\theta$ values. Where the prior density for $\theta$ is highest, the coverage is above 80\%, whereas regions where the prior density is low have coverage below 80\%. The expected coverage, however, is exactly 80\% when integrated with respect to the prior. This is fundamentally true of any Bayesian model with an non-uniform prior: $\cvr(\theta) = 1 - \alpha$ for all values of $\theta$ will never be satisfied. The right side, which illustrates that a similar phenomenon happens for the method we propose, will be discussed in Section~\ref{Sec:coverage}.

\begin{figure}[htb!]
  \begin{center}
    \includegraphics[width=0.8\linewidth]{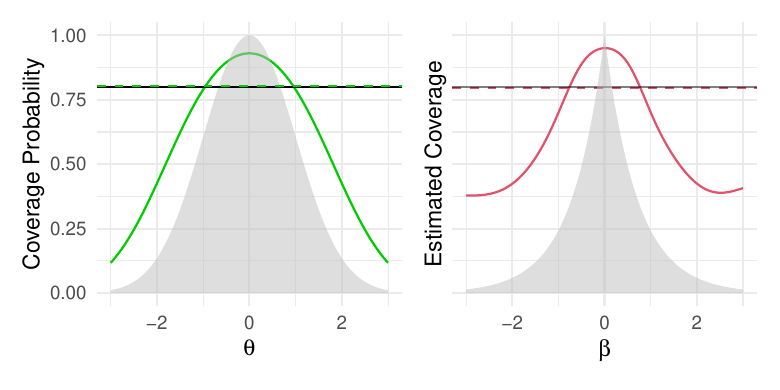}
    \caption{\label{Fig:laplace}
      Coverage probabilities of ridge and lasso (RL-P) confidence intervals. Solid black lines indicates nominal coverage and dashed lines represent average coverage. Left: Exact coverage probabilities across a range of $\theta$ values (see Section~\ref{Sec:difficulties}). Shaded background indicates the normal prior. Right: Empirical coverage probabilities from the simulation in Section~\ref{Sec:coverage} (the curved line is a smooth fit using a binomial GAM). Shaded background shows the Laplace distribution.}
  \end{center}
\end{figure}

In high dimensional problems, there is yet another quantity we can consider: the average coverage. Rather than integrating over a hypothetical distribution of $\theta$ values, we can average over the empirical distribution of parameter values. In other words, we might choose to require that our intervals satisfy $\tfrac{1}{p} \sum_{j=1}^p \cvr(\theta_j) = 1-\alpha$. This criteria is more closely aligned with the Bayesian perspective than a classical frequentist perspective, although it does not specifically require or involve a prior.

Our goal in this paper is not to argue that one of these perspectives is correct and the other is wrong, but rather that the average coverage perspective is reasonable and worthy of consideration. It should not be taken for granted that classical ideas developed for single parameter inference are the best way to approach simultaneous inference for large numbers of parameters. Furthermore, the Bayesian perspective seems to make sense in the context of penalized regression, since penalized regression is intentionally imposing shrinkage towards a prior notion of which parameter values are more likely.

In Section~\ref{Sec:methods}, we propose a new method and in Section~\ref{Sec:results} we see that the resulting intervals, while they do not satisfy classical coverage requirements, perform quite well with respect to the average coverage criterion. We end this section with a short theorem making the explicit connection between Bayesian credible intervals and average coverage.

\begin{thm}
  \label{Thm:bcc}
  If the likelihood is correctly specified according to the true data generating mechanism $p(\y | \bt)$, then a $1-\alpha$ credible set for any parameter $\theta_j$ will satisfy $\int \cvr(\theta_j)p(\bt) d\bt = 1 - \alpha$.
\end{thm}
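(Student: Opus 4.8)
The plan is to recognize the claim as an instance of the familiar fact that credible sets have prior-averaged frequentist coverage equal to their nominal level, and to prove it by computing a single joint probability in two different ways. Under the hypothesis that the likelihood is correctly specified, the prior $p(\bt)$ together with the data-generating mechanism $p(\y \given \bt)$ define one joint distribution $p(\bt, \y) = p(\bt)\,p(\y\given\bt)$ on $(\bt, \y)$. This same joint law is what the posterior is built from, so it can serve simultaneously as the object over which the Bayesian (posterior) statement is made and as the object governing the frequentist sampling of $\y$. The entire argument is then a Fubini/tower-property exchange of the order of integration, and the correct-specification assumption is precisely what guarantees that these two roles coincide.

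First I would fix the $1-\alpha$ credible set $\cA(\y)$ for $\theta_j$ and consider the event $\{\theta_j \in \cA(\y)\}$, writing its probability under the joint law as
\[
  \Pr\{\theta_j \in \cA(\y)\} = \int\!\!\int \one\{\theta_j \in \cA(\y)\}\, p(\y \given \bt)\, p(\bt)\, d\y\, d\bt .
\]
Evaluating this by conditioning on $\y$ first, the inner integral is the posterior probability that $\theta_j$ lies in $\cA(\y)$, which equals $1-\alpha$ by the very definition of a $1-\alpha$ credible set; integrating the constant $1-\alpha$ against the marginal $p(\y)$ then returns exactly $1-\alpha$.

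Next I would evaluate the same double integral in the opposite order, conditioning on $\bt$ first. The inner integral $\int \one\{\theta_j \in \cA(\y)\}\, p(\y\given\bt)\, d\y$ is, by definition, the frequentist coverage of the interval under the true parameter, i.e.\ $\cvr(\theta_j)$ (here understood as a function of the data-generating $\bt$). What remains is the integral of this coverage against the prior, namely $\int \cvr(\theta_j)\, p(\bt)\, d\bt$. Equating the two evaluations yields $\int \cvr(\theta_j)\, p(\bt)\, d\bt = 1-\alpha$, which is the claim.

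I expect the only genuine subtlety to be notational rather than conceptual. The coverage of an interval for $\theta_j$ depends in general on the whole vector $\bt$ through $p(\y\given\bt)$, so writing it as $\cvr(\theta_j)$ is a mild abuse that the outer integration over the full prior $p(\bt)\,d\bt$ silently corrects; I would state this explicitly to forestall confusion. The exchange of integration order needs no care beyond noting that the integrand is a bounded, nonnegative, jointly measurable indicator times a density, so Tonelli applies and integrability is automatic. The one assumption doing real work is correct specification: it is what lets the single density $p(\y\given\bt)$ appear both in the posterior (underwriting the $1-\alpha$ in the first computation) and in the sampling distribution (producing $\cvr$ in the second). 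I would flag that if the likelihood were misspecified, the posterior-coverage identity would no longer transfer to the true sampling law, and the conclusion would fail.
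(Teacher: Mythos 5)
Your proposal is correct and is essentially identical to the paper's argument: both rest on factoring the joint density as $p(\y \given \bt)\,p(\bt) = p(\bt \given \y)\,p(\y)$ and exchanging the order of integration, so that the inner integral becomes the posterior credible probability $1-\alpha$ in one direction and the frequentist coverage $\cvr(\theta_j)$ in the other. The only cosmetic difference is that you frame it as computing one joint probability two ways, while the paper writes the same computation as a chain of equalities starting from the prior-averaged coverage.
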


\begin{proof}
  By definition, a $100(1-\alpha)\%$ credible region for $\theta_j$ is any set $\cA_j(\y)$ such that $\int I\{\theta_j \in \cA_j(\y)\} p(\bt|\y)\,d\bt = 1 - \alpha$. The coverage probability, meanwhile, is defined as $\int I\{\theta_j \in \cA_j(\y)\} p(\y | \bt)d\y$. The average coverage, integrated with respect to the prior $p(\bt)$, is therefore

  \as{
  \int \int I\{\theta_j \in \cA_j(\y)\} p(\y | \bt) p(\bt) d\y d\bt
  &=  \int \int I\{\theta_j \in \cA_j(\y)\} p(\bt | \y) p(\y) d\y d\bt \\
  &=  \int \int I\{\theta_j \in \cA_j(\y)\} p(\bt | \y) d\bt p(\y) d\y \\
  &=  \int  (1 - \alpha) p(\y) d\y \\
  &=  1 - \alpha,
  }

  \noindent and as a result the average coverage is equal to the nominal coverage.
\end{proof}

The above theorem concerns a single parameter of interest $\theta_j$ and its credible interval. An immediate corollary of this result is that the average coverage of $p$ such intervals will also be equal to $1-\alpha$.

A more interesting question is what happens when we average with respect to the empirical distribution of $\theta$ values present in a high-dimensional problem rather than integrating with respect to the prior. In other words, is it true that
\as{\tfrac{1}{p} \sum_{j=1}^p \cvr(\theta_j) \approx \int \cvr(\theta)p(\theta) \, d\theta?}
Note that the left-hand side does not require a Bayesian perspective as no probability distributions of parameters are involved, only the empirical distribution of different values for different parameters. The interpretation here is simpler than that of a conventional frequentist confidence interval: rather than appealing to hypothetical intervals for hypothetical alternative data sets, we are making a more concrete statement here about the $p$ intervals that have just been constructed.

Intuitively, it would seem that the equation above should be true if this empirical distribution of $\theta$ values resembles the prior implied by the penalty and as long as the intervals constructed arise from a distribution resembling a Bayesian posterior. Given the connection between the Bayesian posterior and the bootstrap first pointed out by \cite{Rubin1981}, this would seem to suggest that bootstrap based intervals should satisfy the above equation, but, in Section~\ref{Sec:boot-bias}, we explain why this is not the case. However, in Section~\ref{Sec:methods} we propose an alternative CI construction method inspired by the Bayesian posterior and in Section~\ref{Sec:robustness}, we find that this relationship generally holds for this approach even if the distribution of $\theta$ values is quite different from the prior implied by the lasso penalty.

\subsection{Does the bootstrap give average coverage?}
\label{Sec:boot-bias}

The connection between the bootstrap and a Bayesian posterior was first drawn by \cite{Rubin1981} and further explored by \cite{efron1982} and \cite{Lo1987}. However, these works focused on low dimensional and asymptotic settings where $n \gg p$. \cite{Chatterjee2010} demonstrated that when applied to lasso estimators, the bootstrap is inconsistent -- even if the lasso itself is $\sqrt{n}$-consistent with respect to estimating $\bb$, showing that the frequentist properties of the bootstrap break down in high dimensions. That said, the connection between the bootstrap and a Bayesian posterior along with Theorem~\ref{Thm:bcc} would suggest that perhaps the bootstrap would give correct average coverage. However, here we provide an example showing that the connection between the bootstrap and the Bayesian posterior also breaks down for penalized regression, a problem that becomes much more noticeable when the number of parameters increases. As a consequence, the average coverage of the bootstrap is well below nominal as the bootstrap introduces ``extra bias''.


To provide an example, we return to ridge regression for two reasons. First, we do not face the complication of having estimates shrunk all the way to zero and second, posterior credible intervals can be computed in closed form for Ridge. The simulation is set up to isolate the effect of increasing dimensionality by increasing $p$ $(20, 100, 200)$ but holding $n = 200$ and $\lambda = 0.4$. For the empirical distribution of $\bb$ to be equivalent to the prior (the ideal scenario as indicated by Theorem~\ref{Thm:bcc}) the prior variance ($\tau^2$) must be set to $\sigma^2$ / $n\lam$, since $\lambda$ is the ratio of the prior precision $(1/\tau^2)$ to the information $(n / \sigma^2)$. In this simulation, $\sigma^2 = 100$, so $\tau^2 = 1.25$ and $\beta_j$ was set to the $j/(p+1)$ quantile of a $\Norm(0,\tau^2=1.25)$ distribution. The elements of $\X$ were generated independently from a $\Norm(0, 1)$ and then $\y$ was generated as $\y = \X\bb + \bvep$, where $\veps_i \iid \Norm(0, \sigma^2)$. For each $p$, 1000 data sets were generated and intervals were constructed using both a pairs bootstrap and a Bayesian posterior. Results are provided in \ref{Fig:ridge_converge}, the dashed lines give the average coverages and the solid lines are the estimated coverages as functions of $\beta$.

Even when dimensionality is low, the bootstrap does not entirely agree with the posterior, and the departure increases as we move from $p=20$ to $p=200$. We find that this issue is even worse for the lasso, potentially due to the additional issue of repeatedly drawing exact zeros when bootstrapping; see \ref{sec:boot-fail}. For further explanations for the breakdown, we refer the reader to \ref{Sup:proof}. \ref{Sup:proof} starts with a simple proof in the 1 and 2 predictor setting for both ridge and lasso showing that while this issue increases with dimensionality, that it is present even in low dimensions. Additionally, these proofs indicate that this bias is heavily dependent on the size of the penalty ($\lam$). This is followed by a simulation that decomposes the source of the bootstrap bias in a high dimensional setting for lasso.

\section{Relaxed Lasso Posterior confidence intervals}\label{Sec:methods}


While the bootstrap is not a viable option as outlined in Section~\ref{Sec:boot-bias}, this section and the discourse around Theorem~\ref{Thm:bcc} suggest that if intervals are constructed from a distribution resembling a Bayesian posterior that they should have correct average coverage. We propose the \textbf{Relaxed Lasso Posterior} (RL-P), which constructs intervals from the distribution of $\beta_j$ conditional on the selected features, viewed as a Bayesian posterior. The remainder of this section presents its specific application to lasso-penalized linear regression. Specifically, we define and derive the conditional distributions needed for the interval construction. In this section, we provide a high level derivation of the conditional posterior distributions for lasso-penalized regression. Complete details, including how to calculate quantiles, are provided in \ref{Sup:A}.

As with other penalized regression approaches, the lasso can be formulated as a Bayesian regression model by setting an appropriate prior. This was initially noted by \cite{Tibshirani1996} and explored more extensively by \cite{Park2008}.  For Ridge regression, the prior is a Normal distribution which leads to conjugacy allowing for straightforward interval construction. As seen in both the left side of Figure~\ref{Fig:laplace} and the Ridge Posterior results in \ref{Fig:ridge_converge}, these intervals achieve correct average coverage in ideal settings. Here, for lasso, we derive the conditional distribution of $\bh_j(\lam)$ in attempts to provide intervals analogous to those produced by Ridge.

For the lasso, the corresponding prior is a Laplace distribution, also referred to as the double-exponential distribution:
\as{p(\bb) = \prod_{j = 1}^{p} \frac{\gamma}{2}\exp(-\gamma \abs{\beta_j}), \gamma > 0.}

Let $\hat{S} = \lbrace k: \hat{\beta}_k \neq  0 \rbrace$ denote the set of selected features. Then, let $\hat{S}_j$ denote the set of selected features that excludes feature $j$, so that $\hat{S}_j = \hat{S} \text{ if } j \notin \hat{S}$ and $\hat{S}_j = \hat{S} - \lbrace j \rbrace \text{ if } j \in \hat{S}$. Define $\Q_{\hat{S}_j}$ as $\I - \X_{\hat{S}_j}(\X_{\hat{S}_j} \Tr \X_{\hat{S}_j})^{-1} \X_{\hat{S}_j} \Tr$, the projection matrix onto the features selected by the lasso. The likelihood for $\beta_j$ conditional on the selected features is:

\as{L(\beta_j|\hat{S}_j) \propto \exp(-\frac{\x_j \Tr \Q_{\hat{S}_j} \x_j}{2\sigma^2}(\beta_{j} - \tilde{\beta}_{j})^2)}

\noindent where $\tilde{\beta}_j = (\x_j \Tr \Q_{\hat{S}_j} \x_j)^{-1} \x_j \Tr \Q_{\hat{S}_j} \y$.  This can be seen as a mild extension of the relaxed lasso. It is equivalent to the relaxed lasso for features in $\hat{S}$ but also is capable of providing intervals for features in $\hat{S}^C$.

A normal likelihood and Laplace prior are not conjugate. However, the distribution of $\beta_j$ conditional on $\hat{S}_j $ can be shown to be a composition of right and left truncated normals where the truncation occurs at zero for right and left tails respectively. In this manuscript, we assume that $\X$ has been standardized s.t. $\x_j \Tr\x_j = n$. Then for $\beta_j$ (see \ref{Sup:A} for details),

\al{eq:fcp}{
p(\beta_j | \hat{S}_j) &\propto
\begin{cases}
  C_{-} \exp\{-\frac{\tilde{n}}{2\sigma^2} (\beta_j - (\tilde{\beta}_j + \lambda))^2\}, \text{ if } \beta_j < 0, \\
  C_{+} \exp\{-\frac{\tilde{n}}{2\sigma^2} (\beta_j - (\tilde{\beta}_j - \lambda))^2\}, \text{ if } \beta_j \geq 0 \\
\end{cases}
}
where $\tilde{n} = \x_j \Tr \Q_{\hat{S}_j} \x_j$, $C_{-} = \exp(\tilde{\beta}_j \lambda \tilde{n}/\sigma^2)$ and $C_{+} = \exp(-\tilde{\beta}_j \lambda \tilde{n}/\sigma^2)$. $\tilde{n}$ can be interpreted as the effective sample size.

This formulation is attractive because it allows efficient computation of quantiles. This consists of first determining which normal distribution (left or right tail) the probability corresponds to, then calculating the quantile from the corresponding normal distribution. Again, full details are provided in \ref{Sup:A}.

This solution corresponds to a particular value of $\lam$ and $\hat{\sigma}^2$. Throughout, we select the value of $\lam$ that minimizes cross validation error (CVE), and we estimate $\sigma^2$ as recommended by \citep{Reid2016}:

$$
\hat{\sigma}^2 = \frac{1}{n - |\hat{S}_{\CV}|} ||\y - \X\bbh({\lambda_{\CV}})||_2^2,
$$

\noindent where $|\hat{S}_{\CV}|$ = $\sum \left( \bbh({\lambda_{\CV}}) \neq 0 \right)$.

The relaxed lasso posterior intervals are available through the confidence\_intervals(fit) in the current version of the R package \texttt{ncvreg} (3.16.0).

\section{Results}
\label{Sec:results}

We begin by examining the coverage of the Relaxed Lasso Posterior intervals in what might be considered the ``ideal'' scenario, where the values of $\theta$ match the prior distribution implied by the lasso as discussed in Section~\ref{Sec:difficulties}. We then examine the robustness of the proposed method as the data generating mechanism departs from this ``ideal'' scenario in various ways (Section~\ref{Sec:robustness}). Finally, we compare the proposed confidence interval method to other confidence interval approaches for penalized regression that have been proposed in the literature (Sections~\ref{Sec:Ridge}~and~\ref{Sec:Comparison}), which illustrates the contrast between methods that attempt to debias the intervals and those that do not.

Unless otherwise noted, the nominal coverage rate in all of these experiments is 80\%.

\subsection{Coverage}\label{Sec:coverage}

Given the connection between average coverage and Bayesian credible intervals made by Theorem~\ref{Thm:bcc} and the surrounding discussion, this would suggest that the RL-P method should have approximately correct average coverage when the empirical distribution of $\bt$ matches the prior implied by the lasso penalty, a Laplace (double exponential) distribution.

We generated 1000 independent data sets; for each data set, RL-P intervals were constructed as described in Section~\ref{Sec:methods}. Each data set was simulated as follows. The elements of $\X$ were generated independently from a $\Norm(0, 1)$ with $n = 100$, $p = 101$, and $\beta_j$ was set to the $j/102$ quantile of a Laplace distribution. The coefficients were then scaled so that $\bb \Tr\bb = \sigma^2$, with independent features this results in a signal-to-noise ratio (SNR) of 1. Finally, $\y$ was generated as $\y = \X\bb + \bvep$, where $\veps_i \iid N(0, \sigma^2)$. The results are shown in the right-hand side of Figure~\ref{Fig:laplace}, where the dotted line represents the average coverage across all coefficients, while the solid line is the smoothed estimate of coverage as a function of $\beta$. The black line indicates the nominal coverage rate, which is set to be 80\%.

The RL-P method has average coverage nearly exactly equal to the nominal 80\%. RL-P intervals have high coverage rates for values of $\beta$ near zero and lower coverage rates for values of $\beta$ larger in magnitude. This occurs because for values near zero, the lasso penalty shrinks estimates towards the truth. This leads to a coverage pattern similar to that of Bayesian credible intervals as depicted on the left hand side of Figure~\ref{Fig:laplace} and as described in Section~\ref{Sec:difficulties}.

Low coverage for large values of $\beta$ arises from the bias introduced by the lasso penalty. However, as $n$ increases, the value of $\lambda_{\CV}$ decreases, reducing the bias. Consequently, coverage becomes flatter and closer to the nominal level across the range of $\beta$, as shown in \ref{Sup:alt_ns}.

Bias can also be reduced by choosing an alternative penalty. For example, the method described in Section~\ref{Sec:methods} can be extended to the Minimax Concave Penalty (MCP) to construct Relaxed MCP Posterior (RM-P) intervals. These intervals exhibit coverage that remains much closer to the nominal level across the range of $\beta$ compared to the RL-P intervals (\ref{Sup:MCP}). This demonstrates that the phenomenon of uneven coverage across $\beta$ is not inherent to the proposed confidence interval method, but rather, reflects properties of the penalty used in estimation.

\subsection{Robustness for Average Coverage}
\label{Sec:robustness}

We will now shift our attention to the robustness of the RL-P method under alternative scenarios. We begin by assessing coverage when there is correlation among the predictors. Next, we consider how RL-P performs under various distributions of $\beta$. Finally, we look at how the coverage changes across the range of $\lambda$ values.

\subsubsection{Correlation}
\label{Sec:correlation}

Figure~\ref{Fig:correlation_structure} illustrates the coverage of the RL-P intervals as the level of correlation $\rho = \cor(\x_i, \x_j)$ for $\abs{i-j} = 1$ increases. Otherwise, the simulation design is the same as in Section~\ref{Sec:results}; in fact, the design is exactly the same for $\rho = 0$. The violin plots provide the distributions of average coverages across 1000 simulated data sets for four values of $n$ and three values of $\rho$. For each $n$, the amount of correlation is increased from $\rho = 0$ to $0.5$ to $0.8$.

When $\rho = 0$, RL-P is slightly conservative for $n = 50$ and $n = 100$, but average coverage converges to nominal as $n$ increases. Coverage exceeds the nominal value at smaller $n$ due to the fact that $\sigma^2$ tends to be overestimated at these sample sizes; this phenomenon diminishes as $n$ increases.

Across all sample sizes, RL-P intervals become increasingly conservative as the correlation $\rho$ increases because this leads to more features being selected. The likelihood that a feature with a small effect is selected increases as it becomes more correlated with features that have large effects. This in turn reduces the effective sample size $\tilde{n}$ in Equation~\ref{eq:fcp}, leading to wider intervals. Although this conservative behavior diminishes with larger $n$, it does not disappear entirely.

\begin{figure}[htb!]
  \begin{center}
    \includegraphics[width=0.8\linewidth]{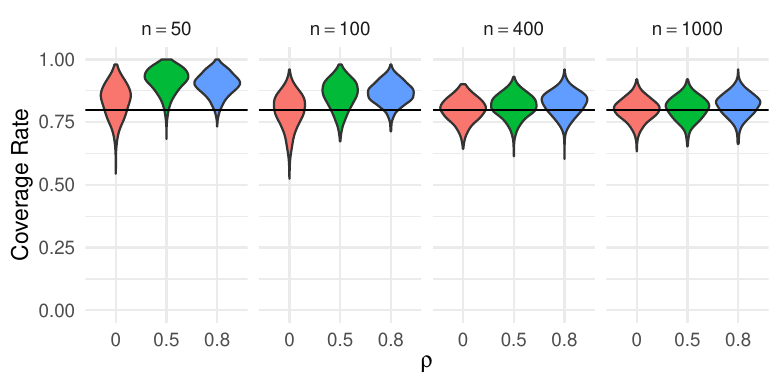}
    \caption{\label{Fig:correlation_structure} This figure presents results for the simulation described in Section~\ref{Sec:correlation}. The violin plots are the distribution of average coverages across 1000 simulated datasets for the RL-P method and across three different levels of autoregressive correlation among the covariates, $\rho = 0 \text{ (no correlation)}, 0.5, 0.8$. For this simulation, p = 100, and the results for each level of correlation are presented for four different sample sizes, $n = p/2, p, 4p, 10p$. The horizontal black line provides reference for the 80\% nominal coverage rate.}
  \end{center}
\end{figure}

\subsubsection{Distribution of Beta} \label{Sec:distribution}

Given the results in Section~\ref{Sec:coverage} that the coverage depends on the magnitude of $\beta$, one might expect that the average coverage is sensitive to the distribution of $\bb$. Table~\ref{Tab:dist_beta} shows the results of $\bb$ distributed as a Laplace as well as 7 alternative distributions. Otherwise, the setup is the same as described in Section~\ref{Sec:results}. Results are shown for 4 sample sizes, $n$ = 50, 100, 400, and 1000. As before, to maintain the specified SNR of 1, $\bb$ is normalized. Prior to normalization, Sparse 1 had $\bb_{1-10} = \pm(0.5, 0.5, 0.5, 1, 2)$ with the rest equal to zero, Sparse 2 had $\bb_{1-31}$ set to 31 evenly distributed quantiles from $N(0, 1)$ with the rest equal to zero, and Sparse 3 had $\bb_{1-51}$ set to 51 evenly distributed quantiles from $N(0, 1)$ with the rest equal to zero. For the T distribution, df was set to 3 and the Beta distribution quantiles were computed from Beta(0.1, 0.1) - 0.5, prior to normalization. The first column of the table provides a visual depiction of these distributions.

The results shown in Table~\ref{Tab:dist_beta} align with what one might expect from Theorem~\ref{Thm:bcc}. First, note that under $\bb$ generated from a Laplace, the average coverage of the RL-P method is slightly conservative for $n=50$ but converges to the nominal rate for the other sample sizes. Distributions that are similar to the Laplace follow this same pattern. For example, when $\bb$ is generated from a T distribution, the coverage rates are nearly identical to the Laplace. When the density / mass is more concentrated near zero, such as with Sparse 1 and 2, the average coverage is above the nominal level. When there is more density away from zero, such as with the normal, the coverage is somewhat below nominal. The worst average coverage occurs when $\beta$ is generated from a Beta(0.1, 0.1) - 0.5 distribution; this is not surprising since the lasso is a poor choice of penalty in this scenario. Even so, the coverage only drops to 69\%, and still converges to the nominal rate as $n$ increases.

\begin{table}[htb!]
  \centering

\begin{tabular}[t]{>{}cccccc}
\toprule
\multicolumn{2}{c}{  } & \multicolumn{4}{c}{Sample Size} \\
\cmidrule(l{3pt}r{3pt}){3-6}
  & Distribution & 50 & 100 & 400 & 1000\\
\midrule
\includegraphics[width=0.67in, height=0.17in]{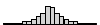} & Laplace & 83.6\% & 79.6\% & 80.0\% & 80.0\%\\
\includegraphics[width=0.67in, height=0.17in]{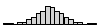} & T & 83.3\% & 78.9\% & 79.8\% & 80.0\%\\
\includegraphics[width=0.67in, height=0.17in]{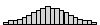} & Normal & 81.6\% & 76.2\% & 79.7\% & 79.9\%\\
\includegraphics[width=0.67in, height=0.17in]{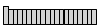} & Uniform & 79.8\% & 73.2\% & 79.5\% & 79.9\%\\
\includegraphics[width=0.67in, height=0.17in]{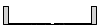} & Beta & 78.1\% & 68.8\% & 79.3\% & 79.9\%\\
\includegraphics[width=0.67in, height=0.17in]{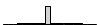} & Sparse 3 & 85.0\% & 82.6\% & 81.9\% & 81.9\%\\
\includegraphics[width=0.67in, height=0.17in]{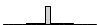} & Sparse 2 & 87.5\% & 85.6\% & 84.5\% & 84.4\%\\
\includegraphics[width=0.67in, height=0.17in]{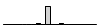} & Sparse 1 & 92.6\% & 91.0\% & 90.3\% & 90.4\%\\
\bottomrule
\end{tabular}
  \caption{\label{Tab:dist_beta} Results are from the simulation described in Section~\ref{Sec:distribution}. The nominal coverage rate is 80\%.}
\end{table}

\subsubsection{Selection of \texorpdfstring{$\lambda$}{lambda}} \label{Sec:lambda}

Throughout the manuscript, $\lam$ is set to the value that minimizes CV error; here, we examine how the choice of $\lam$ affects coverage. The design remains the same as in Section~\ref{Sec:coverage} except that for each data set generated, RL-P intervals are obtained for 25 different values of $\lambda$. Specifically, $\lambda$ was evenly distributed on the $\log_{10}$ scale from $\lam_{\max}$ to $\lam_{\min} = 0.05 \lam_{\max}$. At each value, confidence intervals were obtained and coverage was recorded. This was repeated 1000 times to estimate coverage as a function of $\lambda$ and $|\beta|$. Relative coverage is defined here as the estimated coverage rate minus the nominal coverage rate (red values denote coverage less than nominal, blue values above nominal). For example, at a nominal coverage rate of 80\%, if coverage for a given combination of $|\beta|$ and $\lam$ is estimated to be 55\%, this leads to a relative coverage of 55\% - 80\% = -25\%. The x-axis for $\lambda$ is presented relative to $\lam_{\max}$ and the solid black lines delineate the center 75\% of $\lam_{\CV}$ over the 1000 simulations. The dashed black line indicates the median $\lambda_{\CV}$ and the blue line represents the value of $\lambda$ which provided coverage closest to that of nominal.

\begin{figure}[htb!]
  \begin{center}
    \includegraphics[width=0.6\linewidth]{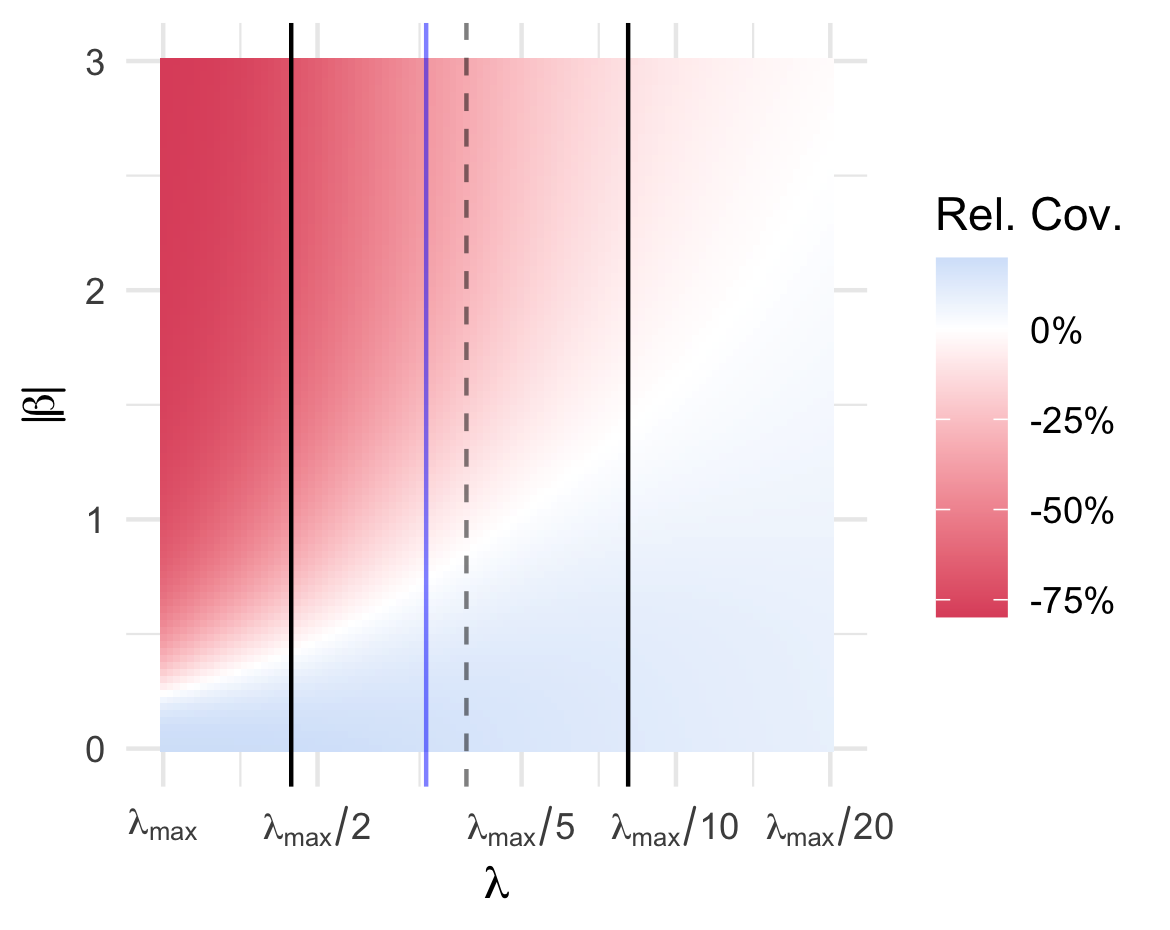}
    \caption{\label{Fig:beta_lambda_heatmap_laplace} The heatmap displays relative coverage for RL-P across a range of $\lambda$s per the simulation described in Section~\ref{Sec:lambda}. A Binomial GAM was used to estimate coverage as a smooth function of the $|\beta|$ and $\lam$. The x-axis for $\lambda$ is presented relative to $\lam_{\max}$ and the solid black lines indicate the center 75\% of $\lam_{\CV}$s over the 1000 simulations. The dashed black line indicates the median $\lambda_{\CV}$ and the blue line represents the value of $\lambda$ which provided coverage closest to that of nominal.}
  \end{center}
\end{figure}

To obtain average coverage near nominal, $\lam$ should be chosen such that the over-coverage for small $|\beta|$ values is balanced by the under-coverage for large $|\beta|$ values. The blue line in Figure~\ref{Fig:beta_lambda_heatmap_laplace} represents the $\lam$ value for which this balance is best achieved. In this scenario, and in general, $\lam_{\CV}$ does a reasonable job at achieving this balance: sometimes below the ``perfect balance'' line, sometimes above, but usually in reasonable agreement.

However, clearly the value of $\lam$ does matter, again supporting the idea presented in Theorem~\ref{Thm:bcc}. This simulation is similar to when $\bb$ is distributed as alternative distributions, but here, instead of altering the data generating mechanism, we are adjusting the prior implied by the lasso penalty. When this implied prior is reasonably close to the data generating mechanism, coverage is near nominal. When the implied prior is more concentrated at zero (e.g. $\lambda$ near $\lambda_{\max}$) or more diffuse (e.g. $\lambda$ near $\lambda_{\min}$), then coverage is below and above nominal, respectively.

\subsection{Effect of Correlation on Individual Intervals} \label{Sec:Ridge}

As shown in Figure~\ref{Fig:correlation_structure}, the average coverage of the proposed RL-P method is robust to increasing correlation. However, this does not mean that the intervals themselves are unaffected by correlation. In this section, we illustrate the effect of correlation between features on the intervals themselves and contrast the intervals produced by lasso with those produced by ridge regression.

In this simulation, we have $n = p = 100$. However, only one $\beta_j$ is non-zero: $\beta_{A} = 1$ and $\beta_{B}, \beta_{N1}, \ldots, \beta_{N98} = 0$. Additionally, the data are simulated such that $\cor(\x_{A}, \x_{B}) = .99$ but all of the N (noise) variables are uncorrelated with $A$, $B$, and each other. The distribution of $\X$ and $\y$ is unchanged from Section~\ref{Sec:coverage}, although here $\sigma^2 = 1$.

Figure~\ref{Fig:highcorr} depicts the results from $1,000$ simulated data sets. On top, 1000 CIs are shown for for 3 features: $A$, $B$, and $N1$; the CIs are colored black if they contain the true parameter value and red if they do not. On bottom, confidence intervals for the first 20 variables for a randomly selected example data set are displayed.

\begin{figure}[htb!]
  \begin{center}
    \includegraphics[width=0.8\linewidth]{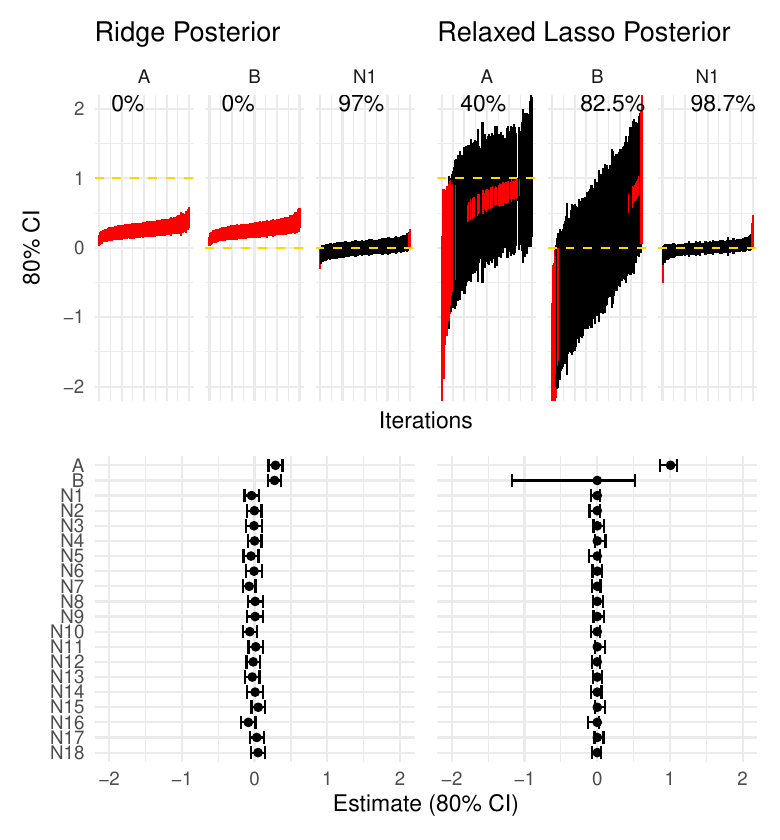}
    \caption{\label{Fig:highcorr}
      Provides results for simulation described in Section~\ref{Sec:Ridge}. The bottom plots show a single example of a intervals produced by Ridge (left) and RL-P (right) from one (randomly selected) of the 1000 datasets for the first 20 variables. The top plot summarizes the resulting CIs for the variables $A$, $B$, and $N1$ across the 1000 simulations. All 1000 CIs are plotted, sorted by their midpoint, with those colored red that did not contain contain the true coefficient value (indicated by the horizontal dashed gold line).
    }
  \end{center}
\end{figure}

Although only feature $A$ truly carries the signal, its high correlation with feature $B$ makes it unclear whether the signal originates solely from $A$ or from both $A$ and $B$. Ridge and Lasso resolve this ambiguity in different ways. Ridge regression makes a fairly strong assumption here that it is much more likely for the signal to be divided equally between $A$ and $B$ than for either $A$ or $B$ to have all the signal. This results in intervals that are very similar for $A$ and $B$. As a result, the correlation between $A$ and $B$ does not introduce much uncertainty -- the confidence intervals for $\beta_A$ and $\beta_B$ are no wider than that of the noise features.

With the Lasso on the other hand, the following scenarios are all equally penalized: $A$ has all the signal, $B$ has all the signal, and the signal is shared between $A$ and $B$. As a result, Lasso estimates are very sensitive to correlation and accordingly, the RL-P CIs for $A$ and $B$ are typically much wider than those for the noise features. Furthermore, although the width of the intervals are similar, the RL-P intervals for $A$ tend to be shifted towards higher values compared to those for $B$, indicating that even with very high correlation, the Lasso typically attributes more of the signal to the causal feature $A$; this is not the case with the Ridge penalty. Lastly, note that the width of the RL-P intervals is bimodal: either narrow or very wide. This can be seen from the construction in Section~\ref{Sec:methods}: the variance of the conditional distribution is largely determined by how much information in $\x_j$ is orthogonal to $\X_{\hat{S}_j}$. When variable $A$ is selected, the interval for variable $B$ will be wide and vice versa.

\subsection{Comparison to other methods} \label{Sec:Comparison}

As noted in the introduction, there are few methods for obtaining intervals for the lasso that have been developed and implemented with available software. Two that we were able to identify were Selective Inference (implemented in the \texttt{selectiveInference} R package) and the de-sparsified lasso (implemented in the \texttt{hdi} R package).

Selective Inference, de-sparsified lasso, and RL-P are based on different principles and operate in fundamentally distinct ways. To review, the de-sparsified lasso \citep{ZhangZhang2014}, as the name suggests, provides a method to debias the original point estimates from a lasso fit to facilitate classical approaches to inference. Note that this process of debiasing changes the underlying model, a point we return to in Section~\ref{Sec:discussion}. Alternatively, Selective Inference \citep{Lee2016,Tibshirani2016} aims to account for the uncertainty in model selection by conditioning on the selected model. This conditioning also acts to correct bias, albeit indirectly. Note that by conditioning on the selected model, Selective Inference only directly provides intervals for the covariates that were selected.

\begin{figure}[htb!]
  \begin{center}
    \begin{minipage}[t]{0.34\linewidth}
      \centering
      \includegraphics[width=\linewidth]{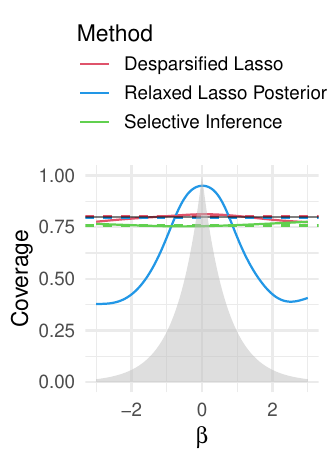}
    \end{minipage}%
    \hfill
    \begin{minipage}[t]{0.64\linewidth}
      \centering
      \includegraphics[width=\linewidth]{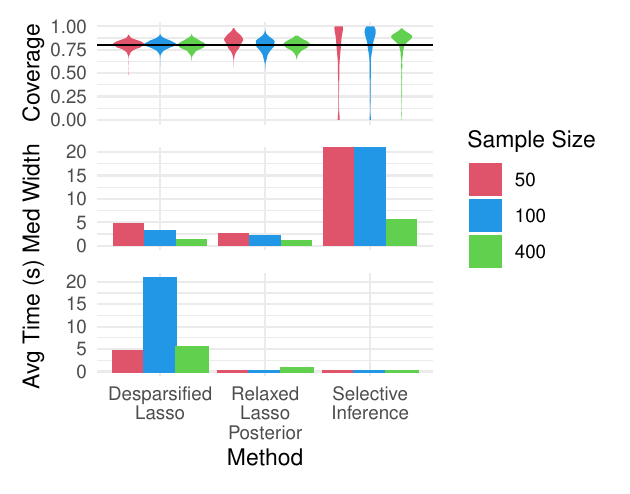}
    \end{minipage}
    \caption{\label{Fig:laplace_comparison} Results are from the simulation described in Section~\ref{Sec:Comparison}. On the left, the fitted curves are from Binomial GAMs fit with coverage being modeled as a smooth function of $\beta$. The dashed lines represent the average coverage for each method across all 1000 independently generated datasets and the solid black line indicates the nominal coverage rate. The shaded background is the distribution of $\beta$ (Laplace). On the right, each plot provides corresponding results for each of de-sparsified lasso, RL-P, and Selective Inference all three different sample sizes. The top provides violin plots of average coverages, the middle is a bar plot of the the median CI widths, and the bottom is a bar plot of the average run times, across all 1000 simulated datasets. The y limits have been truncated for the median width from 150 to 20.}
  \end{center}
\end{figure}

We conducted a simulation study to compare these three methods; the setup is identical to that described in Section~\ref{Sec:coverage}. For each software package, their default options were used. Notably, this means that for de-sparsified lasso's implementation in the \texttt{hdi} package, $\lam$ is set using the 1SE rule from cross-validation, whereas for Selective Inference and RL-P, $\lam$ was set at the value which minimizes CV error.

Selective Inference and de-sparsified lasso adopt a more classical frequentist perspective than RL-P, which is evident in the left side of Figure~\ref{Fig:laplace_comparison}. While all three methods have reasonable average coverage, they achieve this in different ways. As demonstrated in Section~\ref{Sec:IAC}, methods for constructing intervals can either achieve consistent coverage across all values of the target parameter, or they can reflect the shrinkage imposed by the penalty --- they cannot achieve both. Intervals which reflect the shrinkage imposed by the penalty result in uneven coverage across $\beta$. As shown in left left side of Figure~\ref{Fig:laplace_comparison}, Selective Inference and de-sparsified lasso provide the first kind of interval. Either directly or indirectly, the shrinkage imposed by the lasso has been undone by intervals they provide and the result is flat coverage across values of $\beta$. This is unlike RL-P, which reflects the shrinkage of the lasso and results in higher coverage where the prior density (implied by the penalty) is higher.

The right side of Figure~\ref{Fig:laplace_comparison} illustrates how the coverage, interval width, and computational burden of these methods compare. The top panel shows the distribution of average coverage across all 1000 simulations. The average coverage of the de-sparsified lasso is centered around the nominal 80\% coverage. Meanwhile, the distribution of average coverage for Selective Inference is very wide: for some data sets, average coverage was 100\% while for other data sets average coverage was 0\%. For $n = 50$ and $n = 100$, although centered around nominal coverage, average coverage was often well above or well below the nominal rate. The average coverage is less variable at $n = 400$, although it is remains consistently above the nominal rate with a noticeable tail down to 0\%. The behavior for RL-P has been covered previously: it is slightly conservative when $n$ is small but converges to nominal coverage as $n$ increases.

The middle plot provides the median CI width across all covariates from all 1000 simulations. The de-sparsified lasso tends to produce wider intervals, especially when $p \le n$, than RL-P. Selective Inference, on the other hand, produces much wider intervals than the other two methods. In fact, the vertical limits of the panel had to be truncated -- to capture the full bar for Selective Inference when $n = 50$, the vertical limit would need to go up to 150.

Selective Inference differs from de-sparsified lasso and RL-P in that it does not provide intervals for all parameters, only the subset of parameters with nonzero coefficients. And even when Selective Inference does construct intervals, they are often infinitely wide (we will see this again for the real data in Section~\ref{Sec:RDA}). More information on how often these two issues arise in this simulation is found in \ref{Sup:si_int_info}. \citet{Kivaranovic2021} provide an in-depth discussion of the widths of CIs produced by methods like Selective Inference that use a polyhedral approach and show that the expected value of interval width is infinite.

The bottom panel of the right side of Figure~\ref{Fig:laplace_comparison} provides the average run times for each of the methods. The runtime varied considerably between the methods, with Selective Inference the fastest and de-sparsified lasso by far the slowest.  The only noticeable difference in speeds between RL-P and Selective Inference is that Selective Inference scales better with n. In our testing, speed was not a concern for Selective Inference or RL-P, but the de-sparsified lasso was prohibitively slow. Although not shown in the figure, de-sparsified lasso also scales quite poorly with $p$, as we will see in Section~\ref{Sec:Scheetz2006}.

\section{Applications to Real Data}\label{Sec:RDA}

In this section, we apply the RL-P method to two real datasets: a study of acute respiratory illness conducted by the World Health Organization, and a study of gene expression in the mammalian eye \citep{Scheetz2006}. These two datasets sit on opposite ends of the spectrum in terms of dimensionality. The WHO study contains 816 observations and 66 features, while the gene expression study has just 120 observations and 18,975 features. In this section, we also consider the intervals produced by de-sparsified lasso and Selective Inference, comparing the intervals both to each other and to the point estimates provided by the lasso at $\lam_{\CV}$.

\subsection{World Health Organization study on acute respiratory illnesses}\label{Sec:WHO-ARI}

The World Health Organization/Acute Respiratory Infection (WHO/ARI) Multicentre Study collected data on several acute respiratory illnesses in multiple countries \citep{Harrell1998}. Here, we analyze a subset of this study, concerning 816 infants who presented with symptoms of pneumonia in Ethiopia, a major cause of morbidity and mortality for infants under 3 months of age. Our goal here is to identify risk factors for increased severity among infants presenting with serious infections. The outcome is ordinal (taking on a number from 1 - 5); however, for simplicity we treat the outcome as following a Gaussian distribution. The variables collected contain information on vital signs, family history, and clinical observations and represent a range of data types from binary to ordinal to continuous. With $n \approx 10p$, this dataset is not high dimensional. However, sparsity is beneficial both for interpretation and for the practical implementation of using the resulting model in clinical practice.

\begin{figure}[htb!]
  \begin{center}
    \includegraphics[width=0.9\linewidth]{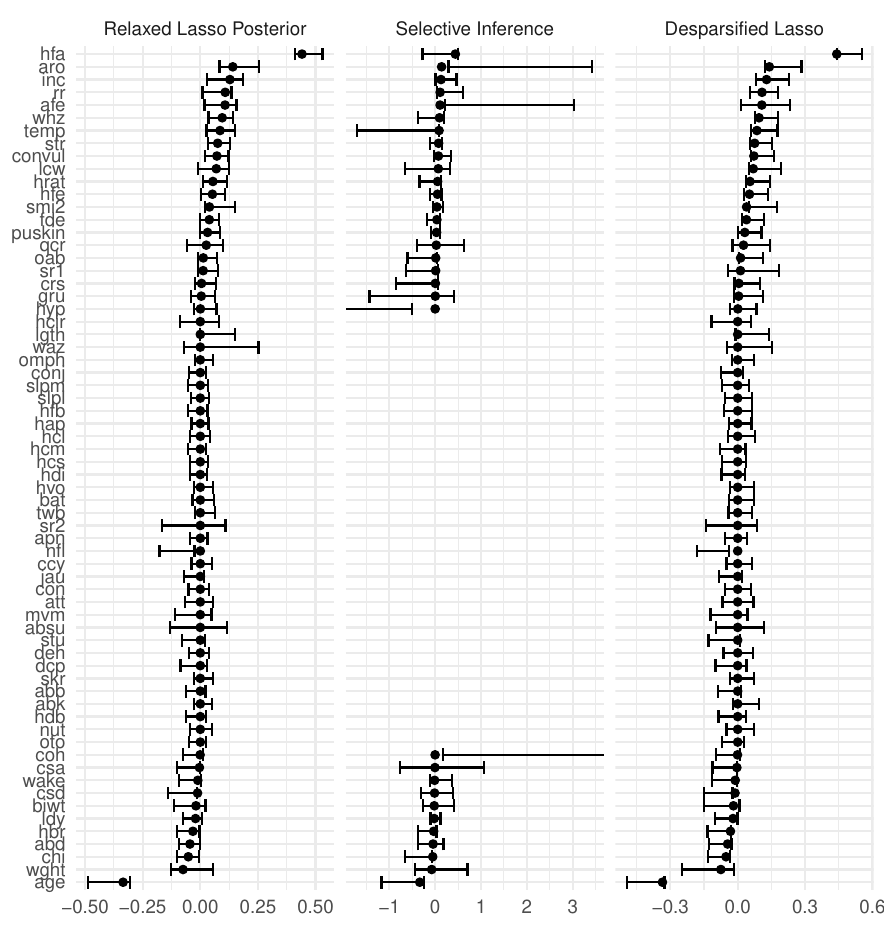}
    \caption{\label{Fig:comparison_data_whoari} Confidence intervals produced by three different methods for all 66 variables in the WHO/ARI dataset described in Section~\ref{Sec:WHO-ARI}.}
  \end{center}
\end{figure}

Figure~\ref{Fig:comparison_data_whoari} provides the confidence intervals from each of the three methods along with corresponding point estimates from the lasso. The intervals are provided on the standardized scale to aid in visualization. The RL-P and de-sparsified lasso intervals are generally similar, although the intervals from RL-P are narrower --- note that the horizontal axis range is different for each method. Additionally, while RL-P intervals, relative to the point estimates, tend to be more symmetric, de-sparsified lasso's intervals are more often skewed away from zero as a result of debiasing. As mentioned earlier, Selective Inference does not produce an interval for every parameter, only for the 32 (out of 66) features that were selected. Furthermore, of these 32, two intervals are infinitely wide and several others are much wider than any intervals produced by either de-sparsified lasso or RL-P. Altogether, de-sparsified lasso produces 25 intervals that do not contain zero, RL-P produces 19 that do not contain zero, while only 8 of the 32 Selective Inference intervals do not contain zero.

\subsection{Study of gene expression in the mammalian eye}\label{Sec:Scheetz2006}

\citet{Scheetz2006} measured the RNA levels from the eyes of 120 rats. Of 31,042 different probes used, 18,976 were detected at a sufficient level to be considered ``expressed.'' For this analysis we treat one of the genes, Trim32, as the outcome since it is known to be linked to the genetic disorder Bardet-Biedl Syndrome (BBS). The remaining 18,975 genes are used as covariates with the goal of determining other genes whose expression is associated with Trim32 and thus may also contribute to BBS.

\begin{figure}[htb!]
  \begin{center}
    \includegraphics[width=0.9\linewidth]{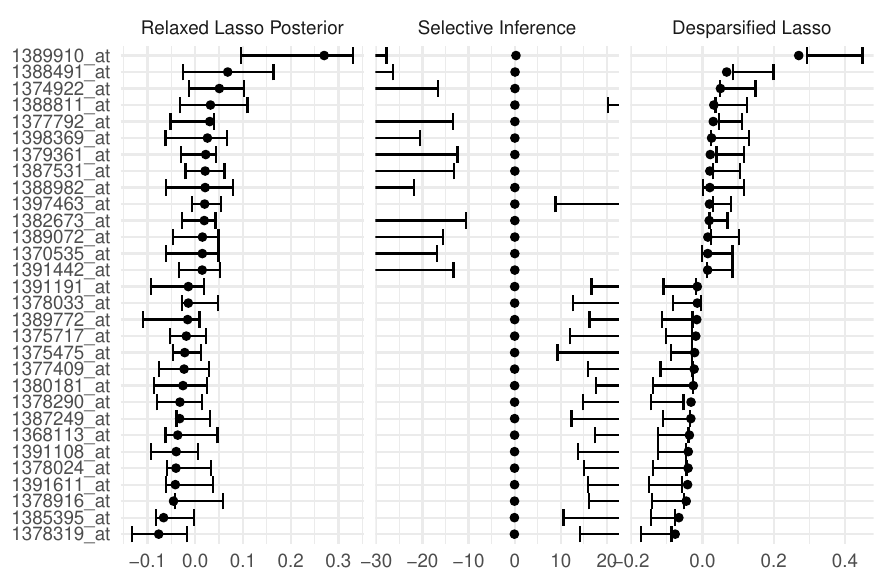}
    \caption{\label{Fig:comparison_data_scheetz} Confidence intervals produced by three different methods for the 30 variables with the largest absolute point estimates in the Scheetz2006 dataset described in Section~\ref{Sec:Scheetz2006}.}
  \end{center}
\end{figure}

Compared to the WHO/ARI data, the increased dimensionality here leads to more pronounced differences between RL-P and de-sparsified lasso (Figure~\ref{Fig:comparison_data_scheetz}). The RL-P intervals exhibit more shrinkage towards zero, while the intervals of de-sparsified lasso are pushed away from zero. Additionally, while 71 of the RL-P intervals do not contain their respective point estimates, this occurs for 981 intervals produced by the de-sparsified lasso. In addition, there is a large discrepancy for the number of significant intervals (intervals not containing zero) between the two methods. De-sparsified lasso produces 989 intervals which exclude zero, while RL-P produces 77. Selective Inference provides intervals for 66 of the 18975 features. For this high-dimensional data ($p > 100n$), however, every single one of them has a lower or upper bound that is infinite. Additionally, none of the Selective Inference intervals contain zero, and in fact, have no overlap with any of the de-sparsified lasso or RL-P intervals (note again that the horizontal axis is different for each of the three methods). Particularly troubling is the fact that of the 66 intervals created by Selective Inference, 62 of them were completely of the opposite sign as the corresponding lasso estimate.

Lastly, it is important to note that de-sparsified lasso took over 6 hours to produce these confidence intervals on a MacBook Pro with 16 GB of RAM and an Apple M1 Pro chip. This is because the computational cost of the de-sparsified lasso scales poorly with $p$. In comparison, RL-P took 1.2 seconds while Selective Inference took about three tenths of a second. With respect to computational burden, de-sparsified lasso is feasible for small to moderately sized datasets, but the cost becomes prohibitive when $p$ is large.

\section{Discussion} \label{Sec:discussion}

Should intervals be biased? Over the past several decades, statisticians have grown more comfortable with the idea of biased estimators. Nevertheless, the statistics community still appears to be uncomfortable with biased intervals. However, if you have chosen to use a biased estimation method, it would seem reasonable that the resulting intervals should reflect that bias. This conflicts with classical frequentist ideas of coverage, but as we have shown, agrees with Bayesian posterior intervals.

One objection to having biased intervals is that it results in under-coverage for large values of $\beta$ --- the parameters that are typically of greatest interest. The same objection, however, applies to the lasso estimates themselves. There are many alternatives to the lasso, including the adaptive lasso, MCP, and SCAD, which reduce the bias imposed by the lasso for large values of $\beta$ \citep{Zou2006, Zhang2010, Fan2001}. Using RL-P with any of these alternative approaches results in less biased intervals (\ref{Sup:MCP}).

In contrast, most of the literature on high-dimensional intervals focuses on debiased constructions. It is debatable, however, whether these intervals remain consistent with the assumptions underlying the original lasso model. Approaches such as the de-sparsified lasso and Selective Inference are not incorrect, but they do not reflect the original lasso estimates --- a point illustrated most clearly in Section~\ref{Sec:Scheetz2006}.

At a fundamental level, it is not possible for a statistical method to shrink point estimates towards zero while not also shrinking intervals towards zero. Attempting to accomplish both will lead to inconsistencies. For an analyst attempting to ``pair'' the lasso with de-sparsified lasso or Selective Inference, it is critical to recognize that the underlying assumptions for the point estimates and for the intervals are not the same. Presenting intervals and point estimates that do not agree is unsatisfying, leading to results that are difficult to interpret and ultimately less convincing.

Rather than try to debias or otherwise correct for the lasso penalty when constructing intervals, we develop here the Relaxed Lasso Posterior and show that it offers a more coherent approach where the intervals reflect the lasso point estimates. Adopting these intervals requires a change in perspective, where the emphasis of the intervals is average coverage across the set of parameters as opposed to individual parameter coverage. We think this work raises interesting questions about which perspective is preferable, and hope that it encourages further exploration of whether classical single-parameter frequentist ideas are the right foundation for high-dimensional inference.

\section*{Data Availability}

All datasets used are available in the R package \texttt{hdrm} (version 0.17.1) which can be found at \url{https://github.com/pbreheny/hdrm}.

\bibliographystyle{ims-nourl}

\clearpage

\begin{appendices}
\renewcommand{\thesection}{Appendix~\Alph{section}}
\renewcommand{\thefigure}{Figure~\Alph{section}\arabic{figure}}
\renewcommand{\thetable}{Table~\Alph{section}\arabic{table}}
\renewcommand{\thealgorithm}{Algorithm~\Alph{section}\arabic{algorithm}}

\makeatletter
\@addtoreset{figure}{section}
\@addtoreset{table}{section}
\@addtoreset{algorithm}{section}
\makeatother

\captionsetup[figure]{name={}}
\captionsetup[table]{name={}}
\floatname{algorithm}{} 
  
\section{Traditional Bootstrap Example}
\label{sec:boot-fail}

Figure~\ref{Fig:boot-fail} shows the coverage as a function of the value of $\beta$ (solid line) and the average coverage (dashed line) using a traditional pairs bootstrapping approach on the simulation setup described in Section~\ref{Sec:coverage}. Compare to the right side of Figure~\ref{Fig:laplace}.

\begin{figure}[hbtp]
  \begin{center}
  \includegraphics[width=0.7\linewidth]{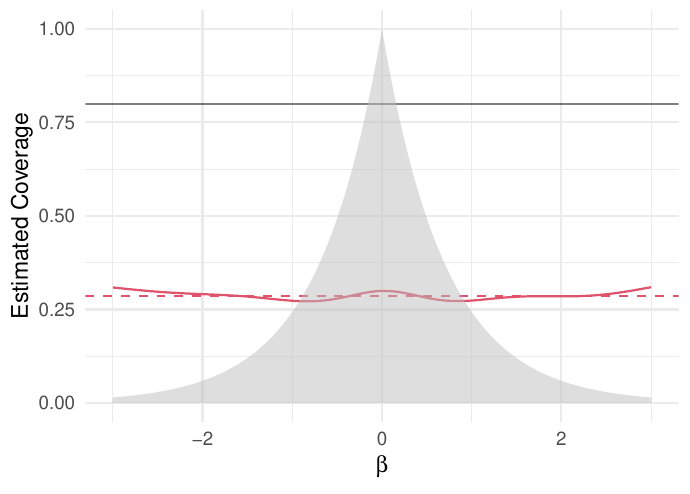}
  \caption{\label{Fig:boot-fail} Corresponds to the setup used in the right side of Figure~\ref{Fig:laplace}, but using a traditional bootstrapping approach.}
  \end{center}
\end{figure}

\clearpage

\section{Sampling from the Full Conditional Posterior}\label{Sup:A}

Here we assume $\X$ has been standardized s.t. $\x_j^T\x_j = n$. Define $\Q_{\hat{S}_j}$ as $\I - \X_{\hat{S}_j}(\X_{\hat{S}_j}^T \X_{\hat{S}_j})^{-1} \X_{\hat{S}_j}^T$, the projection matrix onto the features selected by the lasso. For $\beta_j$ conditional on the selected features

\as{
  \begin{aligned}
  L(\beta_j|\hat{S}_j) &\propto \exp(-\frac{\tilde{n}}{2\sigma^2}(\beta_{j}^2 - 2\tilde{\beta}_{j}\beta_{j})), \\
  \end{aligned}
}

\noindent where $\tilde{\beta}_j = (\x_j^T \Q_{\hat{S}_j} \x_j)^{-1} \x_j^T \Q_{\hat{S}_j} \y$ and $\tilde{n} = \x_j^T \Q_{\hat{S}_j} \x_j$.

The lasso can be formulated as a Bayesian regression model with a laplace (double exponential) prior. In this case, the prior for $\beta_j$ is proportional to $\exp(-\frac{\tilde{n} \lambda} {\sigma^2} \abs{\beta_j})$. This prior ensures that the meaning of $\lam$ is maintained. 

With this the form of the full conditional posterior can be worked out as follows:
\as{
p(\beta_j | \hat{S}_j) &\propto \exp(-\frac{\tilde{n}}{2\sigma^2} (\beta_j^2 - 2\tilde{\beta}_j\beta_j)) \exp(-\frac{\tilde{n} \lambda} {\sigma^2} \abs{\beta_j}) \\
&= \exp(-\frac{\tilde{n}}{2\sigma^2} (\beta_j^2 - 2 \tilde{\beta}_{j}\beta_j +  2 \lambda \abs{\beta_j})) \\
&= \exp(-\frac{\tilde{n}}{2\sigma^2} (\beta_j^2 - 2(\tilde{\beta}_{j}\beta_j - \lambda \abs{\beta_j}))) \\
&=
\begin{cases}
\exp(-\frac{\tilde{n}}{2\sigma^2} (\beta_j^2 - 2(\tilde{\beta}_{j} + \lambda)\beta_j)), \text{ if } \beta_j < 0, \\
\exp(-\frac{\tilde{n}}{2\sigma^2} (\beta_j^2 - 2(\tilde{\beta}_{j} - \lambda)\beta_j )), \text{ if } \beta_j \geq 0 \\
\end{cases} \\
&\propto
\begin{cases}
C_{-} \exp\{-\frac{\tilde{n}}{2\sigma^2} (\beta_j - (\tilde{\beta}_j + \lambda))^2\}, \text{ if } \beta_j < 0, \\
C_{+} \exp\{-\frac{\tilde{n}}{2\sigma^2} (\beta_j - (\tilde{\beta}_j - \lambda))^2\}, \text{ if } \beta_j \geq 0 \\
\end{cases}
}
where $C_{-} = \exp(\tilde{\beta}_j \lambda \tilde{n}/\sigma^2)$ and $C_{+} = \exp(-\tilde{\beta}_j \lambda \tilde{n}/\sigma^2)$.

Note the piecewise defined posterior is made up of a kernel of two normal distributions. This can be leveraged and draws can be efficiently obtained through a mapping onto the respective normal distributions. To define this mapping, it helps to introduce a concept and some notation. First, the use of ``tails'' here refers to the entirety of a distribution between 0 and $\pm \infty$. That is, the lower tail is any part of the distribution below 0 and the upper tail is any part greater than 0, therefore $P(X \in lower \cup X \in upper) = 1$. Accordingly, we will let the tail probabilities in each of the two normals to transformed on to be denoted $Pr_{-}$ and $Pr_{+}$ respectively and the probability in each of the tails of the posterior, denoted $Post_{-}$ and $Post_{+}$ respectively. $Pr_{\pm}$ is trivial to compute with any statistical software. $Post_{\pm}$ is conceptually simple, although care must be taken to avoid numerical instability. With this notation in place, note that,
\as{
p(\beta_j | \hat{S}_j)  & \propto
\begin{cases}
C_{-} Pr_{-}, \text{ if } \beta_j < 0, \\
C_{+} Pr_{+}, \text{ if } \beta_j \geq 0\\
\end{cases}
} which implies that $Post_- = \frac{C_{-} Pr_{-}}{C_{-} Pr_{-} + C_{+} Pr_{+}}$ and similarly for $Post_+$. To avoid numerical instability, or rather to handle it properly when it is unavoidable, we will work on the $\log$ scale. This works well for most of the problem, but computation of $Post_-$ and $Post_+$ need something a bit more since, for example, $\log(Post_-) = \log(C_{-}Pr_{-}) - \log(C_{-} Pr_{-} + C_{+} Pr_{+})$. That is, the denominator still must be computed then the $\log$ taken which does not allow operating on the $\log$ scale to fully address potential numerical instability. Instead, let $\ell_{-} := \log C_{-} + \log Pr_{-}, \ell_{+} := \log C_{+} + \log Pr_{+}$, and $\Delta := \ell_{-} - \ell_{+}$, then $\log(Post_-)$ can be computed with $\Delta - \log\bigl(1 + \exp(\Delta)\bigr)$. This still doesn't completely address the issue, however, if $\exp(\Delta)$ is infinite then $C_-Pr_- >> C_+Pr_+$ and $\log(Post_-) \approx 0$ which means $Post_- \approx 1$ (equivalently $Post_+ \approx 0$).

With these values, we can compute the quantile by mapping the corresponding probability $p$ for the posterior onto the probability $p^*$ for the corresponding normals. Which normal the quantile of interest ultimately comes from is determined based on $Post_{\pm}$. If $p \leq Post_{-}$, then $p$ would be mapped onto the negative normal. If $p > Post_{-}$, then $p$ would be mapped onto the positive normal.  For example, if $Post_{+} = 0.98$ and $p = 0.1$ then $p$ would be mapped onto the positive normal. The transformation to map a given probability from the posterior depends on which tail the quantile resides in on the posterior (equivalently which normal it is being mapped to, the positive or negative). This map is simply:

\as{
p^* &= p \times (Pr_{\pm} / Post_{\pm}) \\
}

Once the respective probability is mapped, one can simply use the inverses of the normal CDF that the probability was mapped to. That being said, there is a nuance worth pointing out. When transforming the probabilities, the step to determine which tail the respective quantile comes from occurs first. With this, the probability should be adjusted so that it refers to the probability between the quantile of interest and the respective tail. After this, then the transformation can be applied. These steps are summarized in Algorithm~\ref{alg:quantile}.

\begin{algorithm}
\caption{Compute Quantile for RL-P Laplace-Normal Distribution}
\label{alg:quantile}
\begin{algorithmic}[1]
  \Require 
    $\tilde{\beta}_j,\;\sigma^2,\;\tilde{n};\lambda,\;p$ \Comment{mean, variance, sample size, penalty, target significance level}
  \Statex
  \State // 1. Compute prior mass for negative and positive regions (on log‐scale)
  \State $Pr_{-} \gets \Phi\bigl(0;\,\tilde{\beta}_j + \lambda,\;\tfrac{\sigma^2}{\tilde{n}}\bigr)$
  \State $Pr_{+} \gets 1 - \Phi\bigl(0;\,\tilde{\beta}_j - \lambda,\;\tfrac{\sigma^2}{\tilde{n}}\bigr)$
  \Statex
  \State // 2. Compute posterior weights $Post_{-},\,Post_{+}$ with log‐scale stabilization
  \State $\ell_{-} \gets \log C_{-} + \log Pr_{-}$
  \State $\ell_{+} \gets \log C_{+} + \log Pr_{+}$
  \State $\Delta \gets \ell_{-} - \ell_{+}$
  \If{$\exp(\Delta) = \infty$}
    \State $\log Post_{-} \gets 0$  \Comment{since $C_-Pr_- \gg C_+Pr_+$}
  \Else
    \State $\log Post_{-} \gets \Delta - \log\bigl(1 + \exp(\Delta)\bigr)$
  \EndIf
  \State $Post_{-} \gets \exp(\log Post_{-})$
  \State $Post_{+} \gets 1 - Post_{-}$
  \Statex
  \State // 3. Invert CDF on the appropriate component
  \If{$p \le Post_{-}$}
    \State $w \gets \dfrac{Pr_{-}}{Post_{-}}$
    \State $q \gets \Phi^{-1}\bigl(p \,w;\;\tilde{\beta}_j + \lambda,\;\tfrac{\sigma^2}{n}\bigr)$
  \Else
    \State $w \gets \dfrac{Pr_{+}}{Post_{+}}$
    \State $q \gets \Phi^{-1}\!\Bigl(1 - (1-p)\,w;\;\tilde{\beta}_j - \lambda,\;\tfrac{\sigma^2}{n}\Bigr)$
  \EndIf
  \State \Return $q$
\end{algorithmic}
\end{algorithm}

\clearpage

\section{Coverage Behavior Under Alternative Sample Sizes}\label{Sup:alt_ns}

Figure~\ref{Fig:coverage_by_n} displays coverage estimates as a smooth function of $\beta$ for three values of n: 50, 100, and 400 but otherwise uses the same setup as the simulation described in Section~\ref{Sec:coverage}.

\begin{figure}[hbtp]
  \begin{center}
  \includegraphics[width=0.6\linewidth]{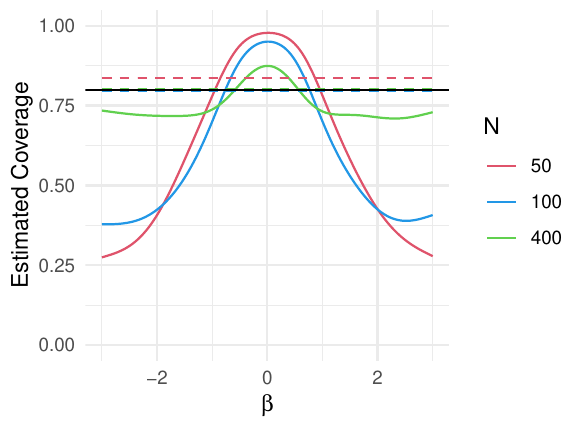}
  \caption{\label{Fig:coverage_by_n} The results displayed are from a simulation with the same set up as in Section~\ref{Sec:coverage} but with n set to three different values: 50, 100, 400. The fitted curves are from Binomial GAMs fit with coverage being modeled as a smooth function of $\beta$. The dashed lines represent the average coverages across all 1000 independently generated datasets and the solid black line indicates the 80\% nominal coverage rate.}
  \end{center}
\end{figure}

For $n = 50$, coverage is overconservative with the characteristic high coverage for values near zero and low coverage for larger values of $\beta$. However, As $n$ is increased to 100 then 400 the characteristic over coverage for small $\beta$ values and under coverage for large $\beta$ values lessens, largely attributable to $\lam_{\CV}$ being smaller. 

\clearpage

\section{Selective Inference Intervals}\label{Sup:si_int_info}

\begin{table}[hb]
  \singlespace
  \centering

\begin{tabular}[t]{cp{3cm}p{3cm}p{3cm}p{3cm}}
\toprule
n & \# Simulations Null Selected & Average \# Parameters & \# Simulations Inf Median Width & \# Simulations Any Inf Width\\
\midrule
50 & 660 & 13.033 & 127 & 349\\
100 & 274 & 28.539 & 125 & 571\\
400 & 0 & 73.441 & 39 & 626\\
\bottomrule
\end{tabular}
  \caption{Additional information on the results for Selective Inference in the simulation described in Section~\ref{Sec:Comparison}.}
  \label{Tab:selective_inference}
\end{table}

Because Selective Inference only provides intervals for the subset of parameters with nonzero coefficients, the results in Figure~\ref{Fig:laplace_comparison} are just for this subset. This is in contrast to the results for de-sparsified lasso and RL-P which are over all parameters. The first two columns of Table~\ref{Tab:selective_inference} provide more details on the size of the lasso models selected by cross validation, and hence, on the number of intervals constructed. The first column indicates how many simulations (out of 1000) the intercept-only model was selected, in which case no intervals are produced. The second column gives the average number of selected parameters (inclusive of when none were selected). It is the subset of parameters represented by column 2 that are the results in Figure~\ref{Fig:laplace_comparison} are from.

For the intervals that were constructed, the third and fourth columns provide the number of simulations that had a infinite median width or any interval with an infinite width, respectively, features that were not evident in the right hand side of Figure~\ref{Fig:laplace_comparison}. Note that simulations where the null model was chosen by definition can not have intervals of infinite width.

\clearpage

\section{MCP Example}\label{Sup:MCP}

\begin{table}[hbtp]
  \centering

\begin{tabular}[t]{ccc}
\toprule
\multicolumn{1}{c}{ } & \multicolumn{2}{c}{Coverage (\%)} \\
\cmidrule(l{3pt}r{3pt}){2-3}
$|\beta|$ & Relaxed Lasso Posterior & Relaxed MCP Posterior\\
\midrule
0.000 & 0.963 & 0.982\\
1.474 & 0.489 & 0.612\\
2.949 & 0.362 & 0.486\\
5.898 & 0.326 & 0.687\\
\bottomrule
\end{tabular}
  \caption{\label{Tab:epsilon_conundrum_coverage} Coverage rates by magnitude of $\beta$ for RL-P using both the lasso and the MCP penalty approximation applied to the Sparse 1 scenario described in Section~\ref{Sec:distribution}. The nominal coverage rate is 80\%. }
\end{table}

The results here provide an example of a version of the RL-P method but with an approximation to the MCP penalty to obtain intervals. The MCP penalty closely resembles that of the lasso near zero but eventually levels out to a constant penalty for larger values of $\beta$ unlike the lasso which applies a penalty proportional to the magnitude of $\beta$. Note that although $\beta$s with intermediate magnitudes are still under covered (albeit with coverage notably higher than RL-P lasso), that the largest $\beta$s have coverage at nearly 70\%, over doubling the coverage of the RL-P lasso.

\clearpage

\section{Additional Details on Bootstrap Bias}\label{Sup:proof}

\begin{figure}[htb!]
  \begin{center}
    \includegraphics[width=0.7\linewidth]{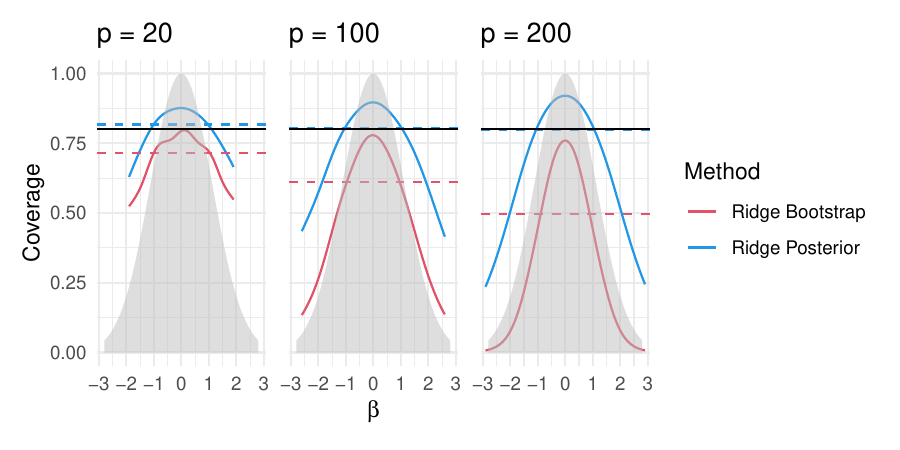}
    \caption{\label{Fig:ridge_converge} Average coverages across p parameters (dashed lines) and estimated coverages as functions of $\beta$ (solid curves) for intervals constructed using a pairs bootstrap (Ridge Bootstrap) and a Bayesian posterior (Ridge Posterior). Full details of the simulation set up can be found in Section~\ref{Sec:boot-bias}.}
  \end{center}
\end{figure}

We continue with a simple proof of the additional bias introduced by bootstrapping in one and two dimensional settings starting first with Ridge regression and then the lasso. After these simple proofs, we provide a simulation that helps generalize this issue to high dimensional settings. Note that here the goal is to provide an intuitive understanding of the issue. Others have already proved this issue more generally \citep{karoui2016, clarté2024}, however, we feel that the lack of straight forward examples may be in part why the bootstrap is still a tool used for penalized regression in general.

Let $\frac{1}{n}x_1^Tx_1 = s_{11}$, and $s_{11}^*$ be the bootstrapped version of $s_{11}$. Furthermore note more that $E_{boot}[s_{11}^*] = s_{11}$. For Ridge, it can be shown that $E(\bh_1) = \frac{s_{11}}{s_{11} + \lam}\beta_1$, letting $g(s_{11}) = \frac{s_{11}}{s_{11} + \lam}$ note that:

$$
\begin{aligned}
g'(s_{11}) &= \lambda (s_{11} + \lam)^{-2} \\
g''(s_{11}) &= -2\lambda (s_{11} + \lam)^{-3} < 0
\end{aligned}
$$

\noindent Thus by Jensen's inequality we have,

$$
E_{boot}[g(s_{11}^*)] \leq g(E_{booot}[s_{11}^*]) = g(s_11).
$$

\noindent Multiplying by $\beta_1$ then gives,

$$
E_{boot}[\bh_1^*] = E_{boot}[g(s_{11}^*)]\beta_1 \leq g(s_{11})\beta_1 = E[\bh_1].
$$

\noindent That is, even in a single parameter setting, just due to the bootstrap variability of $s_{11}$ alone, we would expect bias in the bootstrapped estimate of $\beta_1$.

Now consider a two parameter setting where we assume $\beta_2 = 0$. Let $V = \frac{1}{n}\boldsymbol{X}^T\boldsymbol{X} = \begin{pmatrix} s_{11} & s \\ s & s_{22} \end{pmatrix}$. Similarly here, one can find that

$$
\begin{aligned}
E[\hat{\beta}_1 | V, \lambda] &= \frac{s_{11}(s_{22} + \lambda) - s^2}{(s_{11} + \lambda)(s_{22} + \lambda) - s^2} \beta_1 \\
&= g(s_{11}, s_{22}, s)\beta_1.
\end{aligned}
$$

\noindent Furthermore, with some patience, it is possible to work out the Hessian of $g$:

$$
H = \nabla^{2}g
= \frac{\lambda}{D^{3}}
  \begin{pmatrix}
    -2(s_{22}+\lambda)^{3} & -2(s_{22}+\lambda)s^{2} & 4(s_{22}+\lambda)^{2}s\\[6pt]
    -2(s_{22}+\lambda)s^{2} & -2s^{2}(s_{11}+\lambda) & 2(s_{11}+\lambda)(D+2s^{2})s\\[6pt]
     4(s_{22}+\lambda)^{2}s &  2(s_{11}+\lambda)(D+2s^{2})s & -2(s_{22}+\lambda)(A+3s^{2})
  \end{pmatrix}.
$$

\noindent Because $\lambda/D^{3}>0$, negative semidefiniteness of $H$ is equivalent to that of the scaled matrix $\tilde H = D^{3}H/\lambda$.

$$
\begin{aligned}
\tilde H_{11} &= -2(s_{22}+\lambda)^{3} < 0, \\
\det\bigl(\tilde H_{[1:2,1:2]}\bigr) &= 4(s_{22}+\lambda)^{2}s^{2}D \geq 0, \\
\det(\tilde H) &= -\,8\,\lambda\,(s_{22}+\lambda)^{3}s^{2}\,(s_{11}+\lambda)\,D \leq 0.
\end{aligned}
$$

\noindent As such, $H$ is negative‐semidefinite and consequently $g(s_{11},s_{22},s)$ is globally concave in all three arguments. Now, let $V^*$ be the bootstrapped version of V and note that $E_{boot}[(s_{11}^*, s_{22}^*, s^*)] = (s_{11}, s_{22}, s)$. By Jensen's inequality:

$$
E_{boot}[g(S^*)] \leq g(E_{boot}[S^*]) = g(S).
$$

\noindent Multiplying by $\beta_1$ then gives:

$$
E_{boot}[\hat{\beta}_1^*] = E_{boot}[g(S^*)]\beta_1 < g(S)\beta_1 = E[\hat{\beta}_1].
$$

A similar argument can be made with the lasso, however, the mathematical details become more involved. Whereas with ridge we started off assuming the true values, with lasso it is easier to work with assumed conditions on the lasso estimates themselves as they directly affect the KKT conditions. Starting with a single parameter set up, assume that $\bh_1 > 0$. Then,

$$
\begin{aligned}
&\frac{1}{n}\x_1^T(\y - \x_1 \bh_1) = \lambda \\
&\Rightarrow \bh_1 = \frac{1}{ns_{11}}\x_1^T\y - \frac{\lambda}{s_{11}} \\
&\Rightarrow E[\bh_1] = \beta_1 - \frac{\lambda}{s_{11}} \\
\end{aligned}
$$

\noindent Let $h(s_{11}) = - \frac{\lambda}{s_{11}}$, then

$$
\begin{aligned}
h'(s_{11}) &= \frac{\lambda}{s_{11}^2}, \\
h''(s_{11}) &= \frac{-2\lambda}{s_{11}^3}  < 0,\\
\end{aligned}
$$

\noindent and $h(s_{11})$ is therefore concave. So by Jensen's inequality

$$
\begin{aligned}
E_{boot}[h(s_{11}^*)] \leq h(E_{booot}[s_{11}^*]) = h(s_11).
\end{aligned}
$$

\noindent Adding $\beta_1$ then gives,

$$
\begin{aligned}
E_{boot}[\bh_1^*] = \beta_1 + E_{boot}[h(s_{11}^*)] \leq \beta_1 + g(s_{11}) = E[\bh_1].
\end{aligned}
$$

Now consider a two parameter setting. If we assume $\hat{\beta}_1 > 0$ and $\hat{\beta}_2 = 0$, bias can only be guaranteed when $\beta_2 = 0$. This boils down to the same details as the single parameter case we just considered. If both $\hat{\beta}_1$ and $\hat{\beta}_2$ are assumed positive, we arrive at the following KKT conditions:

$$
\begin{aligned}
\frac{1}{n}\boldsymbol{X}^T\boldsymbol{X} \hat{\boldsymbol{\beta}} &= \frac{1}{n}\boldsymbol{X}^Ty - \lambda 1_2 \\
\end{aligned}
$$

\noindent Again, let $\boldsymbol{V} = \frac{1}{n}\boldsymbol{X}^T\boldsymbol{X} = \begin{pmatrix} s_{11} & s \\ s & s_{22} \end{pmatrix}$, then

$$
\begin{aligned}
\hat{\beta_1} = \frac{s_{22}(x_1^Ty- n\lambda) - s(x_2^Ty- n\lambda)}{n(s_{11}s_{22} - s^2)}
\end{aligned}
$$

\noindent and

$$
\begin{aligned}
E(\hat{\beta}|X) &= \beta_1 - \lambda\frac{s_{22} - s}{s_{11}s_{22} - s^2} \\
&= \beta_1 + h(s_{11}, s_{22}, s).
\end{aligned}
$$

\noindent Differentiation gives
$$
\nabla^{2}h \;=\; \frac{\lambda}{D^{3}}
\begin{pmatrix}
-2s_{22}^{2}(s_{22}-s) & -2s_{22}s^{2} & 4s_{22}^{2}s\\[6pt]
-2s_{22}s^{2} & -2s_{11}s^{2} & 2\,s_{11}s\bigl(D+2s^{2}\bigr)\\[6pt]
4\,s_{22}^{2}s & 2\,s_{11}s\bigl(D+2s^{2}\bigr) & -2s_{22}\bigl(s_{11}s_{22}-s^{2}+3s^{2}\bigr)
\end{pmatrix}.
$$

\noindent With this we find that

$$
\begin{aligned}
H_{11} = -\frac{2\lambda s_{22}^{2}(s_{22}-s)}{D^{3}} < 0, \\ 
\det\bigl(H_{[1:2,1:2]}\bigr) = \frac{4\lambda^{2}s_{22}^{2}s^{2}D}{D^{6}} \geq 0, \\
\det(H) = -\frac{8\lambda^{3}\,s_{22}^{3}s^{2}s_{11}\,s_{11}D}{D^{9}} \leq 0.
\end{aligned}
$$

\noindent Thus, the Hessian is negative-semidefinite.  Therefore the map $g(s_{11},s_{22},s)$ is jointly concave, and multivariate Jensen's inequality implies
$$
E_{\text{boot}}\bigl[\hat\beta_{1}^{*}\bigr]
  \;\le\; E\bigl[\hat\beta_{1}\mid X,\,z_{1}=z_{2}=+1\bigr],
$$
so the bootstrap mean of $\hat\beta_{1}$ sits strictly closer to zero than the original estimate whenever both fitted coefficients are positive. Note that in each of these settings a common thread is that the curvature, which affects the amount of bias introduced by the bootstrap is heavily dependent on $\lambda$. Thus, the larger $\lambda$ is the more the bootstrap will be biased. We also see in the two parameter setting that the correlation between features has a considerable role in the curvature. How much each of these contribute to the bootstrap bias is problem dependent, however, our exploration, such as the example we provide next, suggests that $\lambda$ is likely the larger influencer in general. That said, full exploration is outside the scope of this work, but could be explored by considering the eigenvalues for the hessians above under various scenarios.

How does this extend to high dimensions? To answer this question, we consider the following simulation study.

In order to increase interpretability, we consider a simplified scenario here. Consider a set up with $n = p = 100$ where there is 1 true non-null variable, $A$ s.t. $\beta_A = 2$ that is correlated with a null variable $B$ with $\rho = 0.5$. All other variables are generated independent of $A$ and each other. For this set up, $A$ is always selected to be in the model, both with the original data and for all bootstrap replications at $\lam_{CV}$. This is important as decomposing the bias is more complicated for variables that are not selected to be in the model.

Note, the bias for $\bh_j \neq 0$ is $\frac{1}{n}\x_j^T \epsilon + \frac{1}{n}\x_j^T \X_{-j}(\bb^*_{-j} - \bbh_{-j}) + \lam$. We can break this down further to apply to the scenario outlined above. $Bias_A = \frac{1}{n}\x_A^T \epsilon + \frac{1}{n}\x_A^T \x_{B}(\beta^*_{B} - \bh_{B}) + \frac{1}{n}\x_A^T \X_{N}(\bb^*_{N} - \bbh_{N}) + \lam$. That is, we can decompose the bias into four parts. The first is the irreducible bias which comes from the chance correlation between $\x_A$ and the errors. The last is the bias directly introduced by the lasso penalty. The other two components attribute bias from the single $B$ variable and all 98 $N$ variables respectively. By taking the simulation set up in the previous paragraph and repeating it 1000 times and each time saving the bias attributable to each of the components, we can get an idea of the distribution of the bias components. More specifically, for each generated dataset, we can decompose the bias for the estimates from the original data as well as from the 1000 bootstrap replications. For the bootstrap replications, we can then save the mean bias. Figure~\ref{Fig:bias_decomp_single_B} shows a summary from doing just that. The top panel gives the densities of the mean bootstrap biases across the 1000 repetitions, the middle gives the densities for the biases on the original dataset across the 1000 repetitions, and the bottom give the densities of their paired differences. In this depiction, the contribution of $\lam$ is excluded. Additionally note here that a positive bias is used to indicated bias \emph{towards} zero.

\begin{figure}[hbtp]
    \begin{center}
    \includegraphics[width=\linewidth]{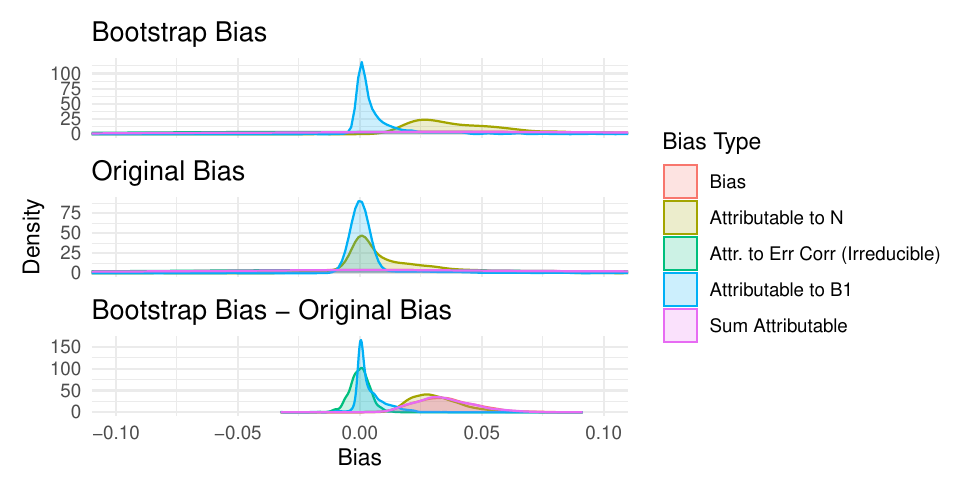}
    \caption{\label{Fig:bias_decomp_single_B} n = p = 100, $\beta_A = 2$, $\beta_B = 0$, $\rho_{A,B} = 0.5$.  All other $\beta$s = 0 and  generated under independence}
    \end{center}
\end{figure}

While the derivation above suggests that increasing correlation is the main contributor to increased bootstrap bias, we see that it is actually the cumulative effect of the 98 N variables that drives the bias rather than the correlation with variable $B$. This has important implications in that even when overall correlations are low, that the sparsity in high dimensional settings is going to lead to a significant bootstrap bias. It is important to differentiate this bias and the bias inherent in the motivation for Section~\ref{Sec:IAC}. The framework put forth in Section~\ref{Sec:IAC} allows for the bias introduced by penalization which we are arguing is permissible in this newly suggested coverage framework, however, the additional bootstrap bias here is what inherently leads to the breakdown of the bootstrap even for average coverage.

\clearpage
\section{A note about stability selection}
\label{sec:stability}

It is no secret that bootstrapping lasso has fundamental issues. The related work in this manuscript is simply meant to show easy to comprehend details to this end. With these issues in mind, it is not common to see a traditional bootstrapping approach applied to the lasso for the purposes of interval construction. Rather, the popular use of resampling techniques for the lasso is in stability selection introduced by  \cite{Meinshausen2010}. However, here, we show that stability selection is still affected by bootstrap bias.

Consider the following set up. $n = 50$, $p = 500$, 4 $\beta$s contain signal: $\beta_{1-4} = (0.25, 0.5, 1, 2)$ and the rest are zero. $\X$ were generated independently from a $\Norm(0, 1)$. Finally, $\y$ was generated as $\y = \X\bb + \bvep$, where $\veps_i \iid N(0, 1)$. Using this data setup, stability selection was performed as outlined in Algorithm~\ref{alg:singlelambda}.

\begin{algorithm}[!ht]
\caption{Bootstrap Stability Selection at a \emph{single} CV‐chosen $\lambda$}
\label{alg:singlelambda}
\begin{algorithmic}[1]
\Require\ $R$ replicated data sets, $B$ bootstraps per data set,
          $p$ predictors
\Statex \textbf{Let:}
  $\mathbf A \in\{0,1\}^{R\times p}$, $\mathbf A^{\!*}\in[0,1]^{R\times p}$
\For{$i=1,\dots,R$}
  \State Generate $(\mathbf X,\mathbf y)$ from the data-generating process
  \State $\lambda_{\text{CV}}\leftarrow
         \operatorname*{arg\,min}_{\lambda}\text{CV‐Error}(\lambda)$
         \Comment{10-fold CV on $(\mathbf X,\mathbf y)$}
  \State Obtain lasso estimates at $\lambda_{\text{CV}}$
        and save
        $\mathbf A_{i,\cdot}\gets\bigl[\hat\beta(\lambda_{\text{CV}})\neq 0\bigr]$

  \State Initialise $\mathbf B^{\!*}\gets\mathbf 0_{B\times p}$
  \For{$b=1,\dots,B$}                        \Comment{pairs bootstraps}
     \State Draw indices $\mathcal I_b$ with replacement from $\{1,\dots,n\}$
     \State $(\mathbf X^{b},\mathbf y^{b})\gets
            (\mathbf X_{\mathcal I_b,\cdot},\mathbf y_{\mathcal I_b})$
     \State Fit lasso on $(\mathbf X^{b},\mathbf y^{b})$ at $\lambda_{\text{CV}}$
     \State $\mathbf B^{\!*}_{b,\cdot}\gets
            \bigl[\hat\beta^{\!*}(\lambda_{\text{CV}})\neq 0\bigr]$
  \EndFor
  \State $\mathbf A^{\!*}_{i,\cdot}\gets
         \dfrac1B\sum_{b=1}^{B}\mathbf B^{\!*}_{b,\cdot}$
\EndFor
\end{algorithmic}
\end{algorithm}

After we computed $\bar{\mathbf A}\gets\dfrac1R\sum_{i=1}^{R}\mathbf A_{i,\cdot}$,  \; $\bar{\mathbf A}^{\!*}\gets\dfrac1R\sum_{i=1}^{R}\mathbf A^{\!*}_{i,\cdot}$ and then compare $\bar{\mathbf A}$ (prob.\ of being selected in the original fit) with $\bar{\mathbf A}^{\!*}$ (expected bootstrap stability) to evaluate how well the inclusion probabilities mirror single-$\lambda$ selection behavior. The results are presented in Table~\ref{Tab:stability_selection}.

\begin{table}[hbtp]
  \centering

\begin{tabular}[t]{ccc}
\toprule
\multicolumn{1}{c}{ } & \multicolumn{2}{c}{Inclusion Probability} \\
\cmidrule(l{3pt}r{3pt}){2-3}
Predictor & Original & Bootstrap\\
\midrule
$\beta_{1}$ & 0.191 & 0.098\\
$\beta_{2}$ & 0.640 & 0.345\\
$\beta_{3}$ & 0.992 & 0.879\\
$\beta_{4}$ & 1.000 & 1.000\\
\bottomrule
\end{tabular}
  \caption{\label{Tab:stability_selection} Results for simulation described in Section~\ref{sec:stability} showing that stability selection also suffers from bootstrap bias. Original selection(\%) provides the empirical selection probabilities for the 4 parameters while Bootstrap stability gives the bootstrap estimate for selection.}
\end{table}

Clearly, there is a fundamental bias for stability selection. That said, further exploration is necessary to understand if this contradicts the claims of \cite{Meinshausen2010} as they primarily focus on FDR control for finite sample problems like the one considered here.

Note, this implementation deviates from the proposed stability selection algorithm originally introduced by \cite{Meinshausen2010} in order to retain a connection to the bootstrap implementation in this manuscript, however, even when considering the entire lasso path fit on sub-bagged samples the bias remained largely the same.

\end{appendices}

\end{document}